\theoremstyle{definition}
\newtheorem{thm}{Theorem}[section]
\newtheorem{lemma}{Lemma}
\newcommand{\kdet}{{\tt kdetrees}}
\newcommand{\pmc}{{\tt Phylo-MCOA}}
\newcommand{\R}{\mathbb R}
\newcommand{\kde}{{\scshape kdetrees}}
\newcommand{\Atwo}{A_2(\theta_1,\theta_2)}
\title{Normalizing kernels in the Billera-Holmes-Vogtmann treespace}
\author{Grady Weyenberg \and Daniel K Howe  \and Ruriko Yoshida}
\date{}
\begin{document}

\maketitle
\newcommand{\sect}{manuscript}
\begin{abstract}

\noindent
{\bf Motivation:} 
As costs of genome sequencing have dropped precipitously, development
of efficient bioinformatic methods to analyze genome structure and
evolution have become ever more urgent. For example, most published
phylogenomic studies involve either massive concatenation of
sequences, or informal comparisons of phylogenies inferred on a small
subset of orthologous genes, neither of which provides a comprehensive
overview of evolution or systematic identification of genes with
unusual and interesting evolution (e.g. horizontal gene transfers,
gene duplication and subsequent 
neofunctionalization).  We are interested in identifying such ``outlying'' gene
trees from the set of gene trees and estimating the distribution of
the tree over the ``tree space''.

\noindent
{\bf Results:}
This paper describes an improvement to the
\kde{} algorithm, an adaptation of classical kernel density estimation to the metric
space of phylogenetic trees (Billera-Holmes-Vogtman treespace),
whereby the kernel normalizing constants, are estimated 
through the use of the novel holonomic gradient methods.  As the
original \kdet{} paper, we have applied 
\kdet{} to a set of 
Apicomplexa genes and it identified several unreliable sequence alignments
which had escaped previous detection, as well as a gene independently
reported as a possible case of horizontal gene transfer.

\noindent
{\bf Availability:}
The updated version of the \kde{} software package is available both
from CRAN (the official R package system), as well as from the
official development repository on Github. (\url{github.com/grady/kdetrees}).

\noindent
{\bf Contact:}
\href{ruriko.yoshida@uky.edu}{ruriko.yoshida@uky.edu} 
\end{abstract}

\section{Introduction}
\label{sec:const-intro}

One of the great opportunities offered by modern genomics is that
phylogenetics applied on a genomic scale (phylogenomics) should be
especially powerful for elucidating gene and genome evolution,
relationships among species and populations, and processes of
speciation and molecular evolution. However, a well-recognized hurdle
is the sheer volume of genomic data that can now be generated
relatively cheaply and quickly, but for which analytical tools are
lagging. There is a major need to explore new approaches to undertake
comparative genomic and phylogenomic studies much more rapidly and
robustly than existing tools allow.  Here, we focus on the problem of
{\em efficiently} identifying significant \emph{discordance} among a
set of gene trees, as well as estimating the distribution of gene
trees from the given set of trees.

The \kde{} algorithm introduced in \citet{weyenberg2014nonparametric}
is an adaptation of classical kernel density estimation to the metric
space of phylogenetic trees defined by \citet{billera2001geometry}. It
is a computationally efficient method of estimating the density of the
trees over the Billera-Holmes-Vogtman (BHV) treespace, and relies on a
fast implementation of the BHV geodesic distance function provided by
\citet{owen2011fast}. The method then uses the density estimates to
identify putative outlier observations. This paper describes an
improvement to \kde{}, whereby the kernel normalizing constants, are
estimated through the use of the novel holonomic gradient methods
\citep{koyama2014software,marumo2014properties}.

In our original paper describing the \kde{} method, we propose a
nonparametric estimator of the form,
\[\hat{f}(T) \propto \frac{1}{N} \sum_{i=1}^N k(T, T_i,h_i).\]
In the \kde{} software, the kernel function implemented is a
spherically symmetric gaussian kernel, i.e.
\begin{equation}
 k(T, T', h) \propto \exp \left( { -
    {\left({\frac{d(T,T')}{h}}\right)}^2 }\right).
\label{eq:kdef}
\end{equation}

Since we are, for the moment, interested primarily in using the
estimator $\hat f$ for outlier detection, knowledge of the overall
proportionality constant for $\hat f$ is not of significant
importance. However, it is important to know how the normalizing
constant associated with $k(T,T',h)$ varies with the selected
bandwidth and with the location of the kernel's center. In our
original paper we argued that, in practice, estimates of the
normalizing constant do not appear to have significant systematic
variation, and that assuming a constant value was a reasonable first
approximation. This paper presents basic results and techniques for
obtaining better approximate values for these normalizing
constants.

In the case case of Euclidean $k$-space with the usual metric, the
kernel (\ref{eq:kdef}) corresponds to a spherically symmetric
multivariate normal distribution centered on the point $T'$, and the
kernel normalization constant is given by
\begin{equation}
c(T',h_i) = (2\pi h_i)^{-k/2}.
\label{eq:normconst}
\end{equation}
Note that not only is there a simple closed form solution for the constant, but
the constant is invariant under changes to the central point $T'$.
However, when applied to the BHV % Billera-Holmes-Velleman
treespace with
the geodesic metric, not only is such a closed form solution
apparently unavailable, but it is also clear that the constant will
depend on the location of the central point. 

For example, consider the case where $T' = 0$, i.e., the star tree, located
at the origin of BHV space. In this case, the kernel integral, 
\begin{equation}
  \label{eq:kernconst}
  c(T',h)=\int_{\mathcal T} k(T,T',h)\,dT,    
\end{equation}
is symmetric over each orthant comprising BHV treespace. Within each
orthant, the integral is equivalent to the normalizing constant of a
zero-mean multivariate normal truncated to $\mathbb R_n^+$. Thus, expression
(\ref{eq:kernconst}) is equivalent to the number of orthants in the
space, $n_O$, multiplied by the corresponding truncated normal constant.

On the other extreme, consider a central tree $T'$ such that every
edge is large compared to the bandwidth $h$, i.e. the tree is
relatively far away from any orthant boundary. In this case, the
kernel integral will be very close to the value given in expression
(\ref{eq:normconst}). If the central point is placed arbitrarily far
away from any orthant boundary, then the integral over any orthant
other than the one containing ${T'}$ can be made arbitrarily small.
Thus, the integral over the orthant containing ${T'}$ itself will be
an increasingly good estimate of the entire normalizing
constant as the central point is moved further away from orthant boundaries.

The updated version of \kde{} presented in this paper improves on the
first generation algorithm by estimating the kernel normalizing
constants~$c(T',h)$. This is accomplished by finding bounding
functions in each orthant which can be more easily integrated than the
true kernel function. While some analytic simplification is possible,
certain expressions cannot be evaluated other than through numerical
methods.

\subsection{Holonomic Gradient Method}
\label{sec:holom-grad-meth}
The \emph{holonomic gradient method} (HGM) is a non-stochastic
numerical method for calculating certain types of integrals. The HGM
is a variation on the gradient descent method of function
optimization, and is suitable for application to holonomic functions
\citep{nakayama2011holonomic,koyama2014software}. Roughly, a holonomic
function is a solution to a homogenous ordinary differential equation
with polynomial coefficients \citep{zeilberger1990holonomic}. Several
integrals of interest to statisticians turn out to be expressible as
solutions to an optimization problem within a holonomic system.

For our present problem two cases are of particular
use. \citet{marumo2014properties} demonstrates the use of HGM to
calculate the normalizing constant for a multivariate normal
distribution truncated to the positive orthant, i.e., $\R_n^+$. In addition,
\citet{hayakawa2014estimation} provides the constants for the
so-called exponential-polynomial family of probability densities,
\[ f(x|\theta_1,\ldots,\theta_k) \propto
\exp\left(x\theta_1+\ldots+x^k\theta_k\right).\]

As was briefly discussed in section \ref{sec:const-intro}, BHV
treespace is a simplical complex of positive Euclidean orthants, and
the normalizing constant for a truncated multivariate normal
distribution is an ingredient for a scheme to approximate the kernel
constants in BHV treespace. In section \ref{sec:results} we show that
it is possible to use the normalizing constants for the truncated
multivariate normal and the exponential-polynomial family, computed
either by HGM or some other method, to construct approximations to the
kernel normalizing constants for BHV space.

\section{Methods}
\label{sec:methods}
\subsection{Normalizing Constants}
In this paper we use $k$ to denote the \emph{unnormalized} kernel
function, i.e., with unit constant of proportionality in
(\ref{eq:kdef}). If we are given a fixed tree~$T_0$ and
bandwidth~$h$, our objective is to compute bounds for the
integral $K(T_0,h)=\int k(T,T_0,h)\; dT$ over the entire BHV treespace, so that
we may normalize the kernel function.

One suitable lower bound function is based on the use of
the triangle inequality.
\begin{lemma}
  For any pair of trees, $k(T,T',h) \ge \underline k(T,T',h).$
  Where,
  \begin{equation*}
    \underline k(T,T',h) = \exp\left(-\frac{(d(T,0)+d(0,T'))^2}{h^2}\right).
  \end{equation*}
  \label{lemma1}
\end{lemma}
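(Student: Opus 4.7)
The plan is to reduce the inequality $k(T,T',h) \ge \underline{k}(T,T',h)$ to a statement about distances by exploiting the monotonicity of the exponential function, and then to close that distance inequality via the triangle inequality in BHV treespace. Since the BHV treespace is a genuine metric space with the geodesic distance $d(\cdot,\cdot)$ (as established in \citet{billera2001geometry}), the triangle inequality is available to us as a standing hypothesis.

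First I would strip away the exponentials. Because $x \mapsto e^{-x/h^2}$ is strictly decreasing in $x$ for $h>0$, the desired inequality
\[
\exp\!\left(-\frac{d(T,T')^{2}}{h^{2}}\right) \;\ge\; \exp\!\left(-\frac{(d(T,0)+d(0,T'))^{2}}{h^{2}}\right)
\]
is equivalent to
\[
d(T,T')^{2} \;\le\; \bigl(d(T,0)+d(0,T')\bigr)^{2}.
\]
Both sides are nonnegative, so taking square roots preserves the inequality and reduces the claim to $d(T,T') \le d(T,0) + d(0,T')$.

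Next I would invoke the triangle inequality for the BHV metric applied to the three trees $T$, $0$ (the star tree / origin), and $T'$, which gives precisely $d(T,T') \le d(T,0) + d(0,T')$. Reversing the chain of equivalences completes the argument.

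There is no real obstacle here: the only substantive input is that $d$ satisfies the metric axioms, and the rest is monotonicity bookkeeping. The one minor point worth noting is that the bound is sharpest when the geodesic from $T$ to $T'$ passes through the origin (so that the triangle inequality becomes equality), and degrades otherwise; this is consistent with the heuristic picture in the introduction, where the true kernel integral is most closely tracked by orthant-wise bounds when $T'$ sits near the origin.
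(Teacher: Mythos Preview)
Your proposal is correct and follows essentially the same approach as the paper: the paper's proof simply observes that the geodesic is no longer than the cone path through the origin, i.e., $d(T,T')\le d(T,0)+d(0,T')$, which is exactly the triangle inequality you invoke, and the monotonicity of $x\mapsto e^{-x/h^2}$ then gives the kernel inequality. Your additional remark about sharpness when the geodesic passes through the origin also matches the paper's subsequent commentary.
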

\begin{proof}
  This is an immediate consequence of the fact that the geodesic path
  between any two trees is the shortest path connecting the trees. In
  particular, it is shorter than the cone path,
  $d(T,T')\le d(T,0)+d(0,T').$
\end{proof}

However, $\underline k$ does better than simply providing a global
lower bound for $k$. In fact, the bound is sharp, as $\underline k$ is
equivalent to $k$ whenever the geodesic between $T$ and $T'$ passes
through the origin. This turns out to be a quite common occurrence, as
geodesics between trees which are not separated by a small number of
NNI interchanges are likely to pass through the origin. As a result
for much of the space, integrating over $\underline k$ will be
equivalent to integrating over $k$ itself. Happily, integrating
$\underline k$ over a single orthant affords an opportunity for
analytical simplification.
\begin{thm}\label{thm:thm1}
  Let $O$ be an arbitrary fixed orthant in BHV treespace, and
  let $p$ denote its dimension. Then, the integral of $\underline
  k(T,T',h)$ over that orthant is given by the expression
  \begin{align}
    \underline C_O(T',h)  := &\int_O \underline k(T,T',h)\; dT \nonumber\\
                                = &
                                % \left(
    \frac{\pi^{p/2}e^{-{d(0,T')^2/ h^{2} }}}
    {2^{p-1}\Gamma(p/2)} \underline A(T',h).
    %\right)
    \label{eq:Czero}
  \end{align}
  Where, if we let  $\theta_1=-2d(0,T')/h^2$ and $\theta_2=-h^{-2}$, then
  \begin{equation}
    \label{eq:Alower}
   \underline A(T',h) = \int_0^\infty r^{p-1}
  \exp\left(\theta_1r+\theta_2r^2 \right)\,dr.
  \end{equation}
\end{thm}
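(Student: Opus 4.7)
The plan is to recognize that within a single orthant $O$ of BHV treespace, the geometry reduces to that of the positive Euclidean orthant $\R_+^p$, so that I can parameterize $T \in O$ by $x \in \R_+^p$ with $d(T,0) = \|x\|$. Substituting into $\underline k$ and writing $a := d(0,T')$, the integrand becomes a function of $\|x\|$ alone, namely
\begin{equation*}
\underline k(T,T',h) = \exp\!\left(-\frac{(\|x\|+a)^2}{h^2}\right),
\end{equation*}
which invites a polar-coordinate attack.

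First I would expand the square in the exponent, $(\|x\|+a)^2 = \|x\|^2 + 2a\|x\| + a^2$, and pull the constant factor $e^{-a^2/h^2}$ out of the integral. This immediately produces the leading $e^{-d(0,T')^2/h^2}$ factor that appears in the claimed formula.

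Next I would convert to spherical coordinates on $\R_+^p$. The key geometric fact here is that the $(p-1)$-sphere of radius $r$ restricted to the positive orthant has surface measure equal to $1/2^p$ of the full sphere of radius $r$, i.e.\ $\frac{1}{2^p}\cdot\frac{2\pi^{p/2}}{\Gamma(p/2)}\,r^{p-1} = \frac{\pi^{p/2}}{2^{p-1}\Gamma(p/2)}\,r^{p-1}$. Since the integrand depends only on the radial coordinate, the angular integration produces exactly this factor and the remaining integral is one-dimensional:
\begin{equation*}
\int_0^\infty r^{p-1}\exp\!\left(-\frac{2a}{h^2}r - \frac{1}{h^2}r^2\right)\,dr.
\end{equation*}
Identifying $\theta_1 = -2a/h^2 = -2d(0,T')/h^2$ and $\theta_2 = -1/h^2$, this is precisely $\underline A(T',h)$ as defined in (\ref{eq:Alower}), and collecting the prefactors yields (\ref{eq:Czero}).

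The only nontrivial step is the reduction in step one: asserting that the geodesic distance to the origin within an orthant is simply the Euclidean norm of the edge-length vector. This is intrinsic to the BHV construction (each orthant is isometrically embedded in Euclidean space and the origin lies at the corner), but it is what lets the entire problem collapse to a radial computation. Everything after that is routine spherical-coordinate bookkeeping, and no attempt is made to evaluate $\underline A(T',h)$ in closed form — indeed, its treatment as an exponential-polynomial integral amenable to HGM is the whole point of keeping it in this unevaluated form.
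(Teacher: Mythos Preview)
Your proposal is correct and follows essentially the same route as the paper: identify the orthant with $\R_+^p$ so that $d(T,0)=\|x\|$, factor out $e^{-d(0,T')^2/h^2}$, pass to spherical coordinates, and separate the angular part from the radial integral $\underline A(T',h)$.

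The only difference is in how the angular constant $\pi^{p/2}/\bigl(2^{p-1}\Gamma(p/2)\bigr)$ is obtained. The paper integrates each angular coordinate over $[0,\pi/2]$ explicitly via $\int_0^{\pi/2}\sin^{p-k-1}\theta\,d\theta=\tfrac12 B\bigl((p-k)/2,1/2\bigr)$ and then telescopes the resulting product of Beta functions. You instead invoke the known surface area $2\pi^{p/2}/\Gamma(p/2)$ of the unit $(p-1)$-sphere together with the reflection-symmetry observation that the positive orthant captures exactly a $2^{-p}$ fraction of it. Your shortcut is cleaner and avoids the Beta-function bookkeeping, but both arguments are doing the same spherical-coordinate computation; there is no substantive divergence in strategy.
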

\begin{proof}
  The distance $d(T,0)$ is the usual $l_2$-norm of
  the vector of edge lengths for the tree $T$, and $O$ is the
  positive orthant~$\R_p^+$. Thus, if we express the
  integral over $O$ in an angular coordinate system, 
  \begin{align}
    \underline C_O(T',h)
%    \int_O \underline k(T,T',h)\,dT
    &= e^{-\frac{d(0,T')^2}{h^{2}}}
        \int_O
        e^{-\frac{(d(T,0)^2+2d(0,T')d(T,0))}{h^{2}}}\, dT\nonumber\\
    &= e^{-\frac{-d(0,T')^2}{h^{2}}} \int_0^\infty\int_\Theta
        e^{\theta_1r+\theta_2r^2} dV(r,\Theta)
    \label{eq:radialIntegral}
  \end{align}
  Now the volume element in $\R^p$ in an angular coordinate system is
  \[dV(r,\Theta) = r^{p-1} dr
  \prod_{k=1}^{p-1}\sin^{p-k-1}(\theta_k)\,d\theta_k,\]
  and it so happens that one of the definitions of the Beta function
  yields,
\[\int_0^{\pi/2} \sin^{p-k-1} (\theta) \,d\theta = \frac{1}{2}
B((p-k)/2,1/2).\]
Integrating in all the radial coordinates yields a product of beta
functions which telescopes down to the constant appearing in (\ref{eq:Czero}).
% The function we are trying to integrate has no dependence on $\Theta$,
% so the angular portion of integral (\ref{eq:radialIntegral}) is simply the
% product
% \[ 
% \prod_{k=1}^{p-1} \frac{1}{2}B\left(\frac{k}{2},\frac{1}{2}\right) =
% \frac{\Gamma(1/2)^p}{2^{p-1}\Gamma(p/2)} = \frac{\pi^{p/2}}{2^{p-1}\Gamma(p/2)}.
% \]
This reduces the problem to a single integral in the radial coordinate, which
is equivalent to $\underline A(T',h)$.% from (\ref{eq:Alower}).
\end{proof}

Unfortunately, the function $\underline A(T',h)$ has no general closed
form solution. However, there are several methods that we can use
to obtain a numerical estimate of this value. The HGM method developed
in \citet{hayakawa2014estimation} is one such method for obtaining
this value. It is also reasonable to calculate this particular integral
using classical quadrature methods.

\begin{lemma}
  Following the notation of \citet{hayakawa2014estimation},
  \[ \underline A(T',h) = \partial_1^{p-1} \Atwo.\]
  Here, $\Atwo$ is the normalizing constant for the
  exponential-polynomial distribution of order 2, the $\theta$ are
  defined as in Theorem \ref{thm:thm1}, and $\partial_1^m$ means the
  $m$-th partial derivative with respect to $\theta_1$.  Furthermore,
  Hayakawa gives the following equivalence for the first partial derivative
  \[ \partial_1\Atwo = -\frac{1}{2\theta_2}\left\{1 +
    \theta_1\Atwo\right\}, \]
  and for the the higher derivatives, $m\ge 2$, the partials can be expressed
  recursively in terms of lower order derivatives,
  \begin{align}
  \partial_1^{m}\Atwo=-\frac{1}{2\theta_2}\{&
                         (m-1) \partial_1^{m-2}\Atwo+\nonumber\\ 
                     &\theta_1\partial_1^{m-1}\Atwo\}.\label{eq:partialA}
  \end{align}
\end{lemma}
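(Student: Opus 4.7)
My plan is to treat $A_2(\theta_1,\theta_2)=\int_0^\infty e^{x\theta_1+x^2\theta_2}\,dx$ (which is a well-defined analytic function for $\theta_2<0$) and work out all three claims by differentiation under the integral sign combined with a single integration-by-parts identity. The strict negativity of $\theta_2=-h^{-2}$ is what guarantees that every integral converges absolutely and that the usual Leibniz rule applies, so interchange of $\partial_{\theta_1}$ and $\int_0^\infty$ is legitimate.

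For the first claim, I would differentiate $A_2$ under the integral sign $p-1$ times with respect to $\theta_1$; each differentiation brings down a factor of $x$, so $\partial_1^{p-1}\Atwo=\int_0^\infty x^{p-1}e^{x\theta_1+x^2\theta_2}\,dx$, which is exactly $\underline A(T',h)$ of equation (\ref{eq:Alower}) after renaming the dummy variable $x\mapsto r$.

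For the recurrences, the key observation is the exact derivative identity
\begin{equation*}
\frac{d}{dx}\bigl[x^m e^{x\theta_1+x^2\theta_2}\bigr]
= m\,x^{m-1}e^{x\theta_1+x^2\theta_2}+(\theta_1+2\theta_2 x)x^m e^{x\theta_1+x^2\theta_2}.
\end{equation*}
Integrating from $0$ to $\infty$, the left side equals $-\bigl[x^m e^{x\theta_1+x^2\theta_2}\bigr]_{x=0}$, because the $\theta_2<0$ tail kills the boundary at infinity. This limit is $-1$ when $m=0$ and $0$ when $m\ge 1$. Recognizing each integral on the right as a partial of $A_2$, namely $\int_0^\infty x^j e^{x\theta_1+x^2\theta_2}\,dx=\partial_1^j\Atwo$, yields for $m=0$ the identity $\theta_1\Atwo+2\theta_2\,\partial_1\Atwo=-1$, which rearranges to the stated formula for $\partial_1\Atwo$; and for $m\ge 1$ the identity
\begin{equation*}
m\,\partial_1^{m-1}\Atwo+\theta_1\,\partial_1^m\Atwo+2\theta_2\,\partial_1^{m+1}\Atwo=0,
\end{equation*}
which after shifting indices $m\mapsto m-1$ and solving for $\partial_1^{m}\Atwo$ reproduces the recurrence (\ref{eq:partialA}).

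There is essentially no conceptual obstacle here; the only care needed is the bookkeeping of the boundary term $x^m e^{x\theta_1+x^2\theta_2}|_{x=0}$, which distinguishes the $m=0$ base case from the general $m\ge 2$ recursion, and a one-line justification that differentiation under the integral is valid (uniform integrability of $x^j e^{x\theta_1+x^2\theta_2}$ for $\theta_2$ in any compact subset of $(-\infty,0)$). Everything else is algebraic rearrangement.
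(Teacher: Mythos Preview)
Your proof is correct and self-contained. The paper's own proof is essentially a pointer: it cites equations (4) and (7) of \citet{hayakawa2014estimation} for the identity $\underline A=\partial_1^{p-1}\Atwo$ and for the first-derivative formula, and then remarks that the higher-order recursion is obtained simply by differentiating the first-derivative identity with respect to $\theta_1$ (the product rule on $\theta_1\Atwo$ produces the $(m-1)\partial_1^{m-2}\Atwo$ term). Your route is slightly different: instead of differentiating the identity for $\partial_1\Atwo$ repeatedly, you integrate the total derivative $\tfrac{d}{dx}\bigl[x^m e^{x\theta_1+x^2\theta_2}\bigr]$ for each $m$ and read off the recursion directly, handling the $m=0$ boundary term to recover the base case. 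Both arguments are elementary and equivalent in spirit; yours has the advantage of being independent of the cited reference and of making the convergence and differentiation-under-the-integral justifications explicit, while the paper's one-line differentiation shortcut is marginally quicker once the $m=1$ identity is accepted.
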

\begin{proof}
  See \citet{hayakawa2014estimation}, Section 2, equations (4) and
  (7). The latter expression can be easily obtained by differentiation
  of the expression for the first partial.
\end{proof}
These results are sufficient to use the \emph{hgm} package
described in \citet{koyama2014software} to implement the lower bound
for the orthant integral,~$\underline C_O(T',h)$. 

The desired normalization constant for function (\ref{eq:kdef}) can be
decomposed as the sum of integrals over each orthant in BHV
space. Thus, if $n_O$ is the number of orthants in the BHV space, then
$n_O \cdot \underline C_O(T',h)$ is a crude lower bound for the
overall normalizing constant. Although this is a poor bound, it can be
improved by obtaining better bounds for the contribution from various
orthants and adjusting accordingly.

The most obvious orthant to begin with is the orthant containing the
``central'' tree $T'$, which we shall call $O_{T'}$. This is the
orthant where the difference between $k$ and $\underline k$ will be
the greatest, and thus the largest improvement to the bounding
constant is to be found here. Note that in this case, the integral
over $O_{T'}$ is given by,
\begin{align}
C_{O_{T'}}(T',h) &= \int_{O_{T'}}
                     \exp\left(d(T',T)^2/h^2\right)\,dT\nonumber\\
  &=\int_{\R_p^+}\exp\left(-||x-x_{T'}||/h^2\right)\,dx.
\end{align}
This is simply the integral of a radially-symmetric multivariate
gaussian kernel centered at the point $T$ over the positive
orthant. Such a normalizing constant can also be calculated using HGM,
and an implementation is included in the \emph{hgm} R package
\citep{koyama2014software}.
% The \kde{} software has been updated to
% include estimation of the kernel normalizing constants via these methods.

Further improvements to the integral for orthants which adjoin directly to
$O_{T'}$ can be made by noting that a relationship similar to that of
Lemma \ref{lemma1} will hold, but with the third point in the triangle
inequality being somewhere on the orthant boundary, instead of the
origin. However, in practice the improvements to the bounds obtained
in this way are quite small, given the typical values of the
bandwidths which occur in practice and the small number of orthants to
which the calculations apply. For this reason, and in the interests of
controlling the overall numerical complexity of the \kde{} algorithm,
these ``second-order'' approximations are not implemented at
at this time, but may appear in future updates. 

\subsection{Outlier Test}
In \cite{weyenberg2014nonparametric} we chose to implement a outlier
test of the form, $\hat f(T_i) < c^*$, where the critical value $c^*$
is selected using Tukey's quartile method,
\[ c^* = Q_1 - k(Q_3-Q_1). \]
Here, $\hat f(T)$ denotes our density estimate for tree $T$, and the
quartiles,~$Q_1,Q_3$,are calculated using all observed tree density estimates.

Further experimentation with the method has suggested that better
performance is obtained if the tree density scores are transformed to
the log-scale before the classification step takes place,
$\log f(T_i) < c^*$. This transformation was chosen because the raw
scores,~$f(T_i)$, are themselves bounded below by zero, and the log
transformation removes this bound. The quantiles used to compute the
critical value are also obtained using the log-transformed scores. Due
to the better performance characteristics of this method, the default
classifier algorithm for \kde{} has been changed to operate on the
log-density scale.

\subsection{Leaf Edges}
The dimension of the orthants comprising tree space is determined by
the number of taxa in the trees, with each edge in the fully-resolved
tree contributing a dimension to each of the orthants comprising the
space. However, the $n$ leaf edges are represented in the space in
such a way that the space can be decomposed into a Cartesian product
$S\times \R_n^+$. The portion of the space $S$ is associated with the
internal edges of the trees, while the positive euclidean orthants
$\R_n^+$ are associated with the leaf edges
\citep{billera2001geometry}. Because of this decomposition, there is
not a large amount of topological information contained in the portion
of the space corresponding to the leaf edge lengths. 

If we remove the leaf edges from the calculation, the dimension of
treespace can be reduced by approximately half, while retaining the
important topological information. This has the benefit of simplifying
the overall density estimation problem, as well as the computation of
the normalizing constant estimates by the HGM methods. While the
original \kde{} algorithm included the leaf edges in the geodesic
calculations, the updated version omits them from consideration. A new
option flag allows for restoration of the original functionality.

% Let $s$ denote the dimension of the orthants composing the appropriate
% BHV treespace and let $N_\square$ be the number of such orthants. 
% %The tree with all branch lengths equal to zero is denoted $0$.

% To improve our bound somewhat, let $\square_-$ be an orthant such
% that the geodesics connecting the points in $\square_-$ to $T_0$ pass
% through the origin. [[Need to prove this property.]] For such an
% orthant, the geodesic distance may be decomposed as
% $d(T_0,T)=d(T_0,0)+d(0,T)$, and thus the value of the kernel
% decomposes as
% \begin{eqnarray*}
% k(T,T_0,h) &=& \exp\left(-\frac{(d(T_0,0)+d(0,T))^2}{h^2}\right)\\
%  &=& \exp\left(-\frac{d(T_0,0)^2+d(0,T)^2+2d(T_0,0)d(0,T)}%
% {h^2}\right)\\
% % &=& \exp\left(-h^{-2}d(0,T_0)^2
% % \right)
% % \exp\left(-h^{-2}(d(0,T)^2,+2d(T_0,0)d(0,T))
% % \right)\\
% &=& e^{-\frac{e_0'e_0}{h^2}}\cdot
%  \exp\left(-h^{-2}(e'e + 2\sqrt{e_0'e_0}\sqrt{e'e})
% \right).
% \end{eqnarray*}
% Where $e$ and $e_0$ are the vectors of internal branch lengths of $T$
% and $T_0$, respectively. (The geodesic
% connecting any tree with the origin is a straight line in $\R^k$.)
% Note that this expression can be reparameterized as,
% \begin{equation}
%   \label{eq:1}
% \tilde k(r,r_0,h) = c \cdot \exp\left\{\theta_1
%   r + \theta_2 r^2\right\}.
% \end{equation}
% Where $c, \theta_1, \theta_2$ are functions of only $h$
% and $T_0$, and $r^2=e'e$.

\section{Results}
\label{sec:results}
The updated version of the \kde{} software package is available both
from CRAN (the official R package system), as well as from the
official development repository on Github. (\href{github.com/grady/kdetrees}{github.com/grady/kdetrees}).

\subsection{Simulations}
A set of simulated datasets were constructed and analyzed, using a
similar design as the first simulation described in
\citet{weyenberg2014nonparametric}. The simulations measure the true
and false positive rates for identification of known outlier trees
within a set of trees drawn from a common distribution. Results of the
simulation are summarized as ROC curves, and are presented in Figure
\ref{fig:hgmroc}.

\begin{figure}
  \centering
  \includegraphics[width=\columnwidth]{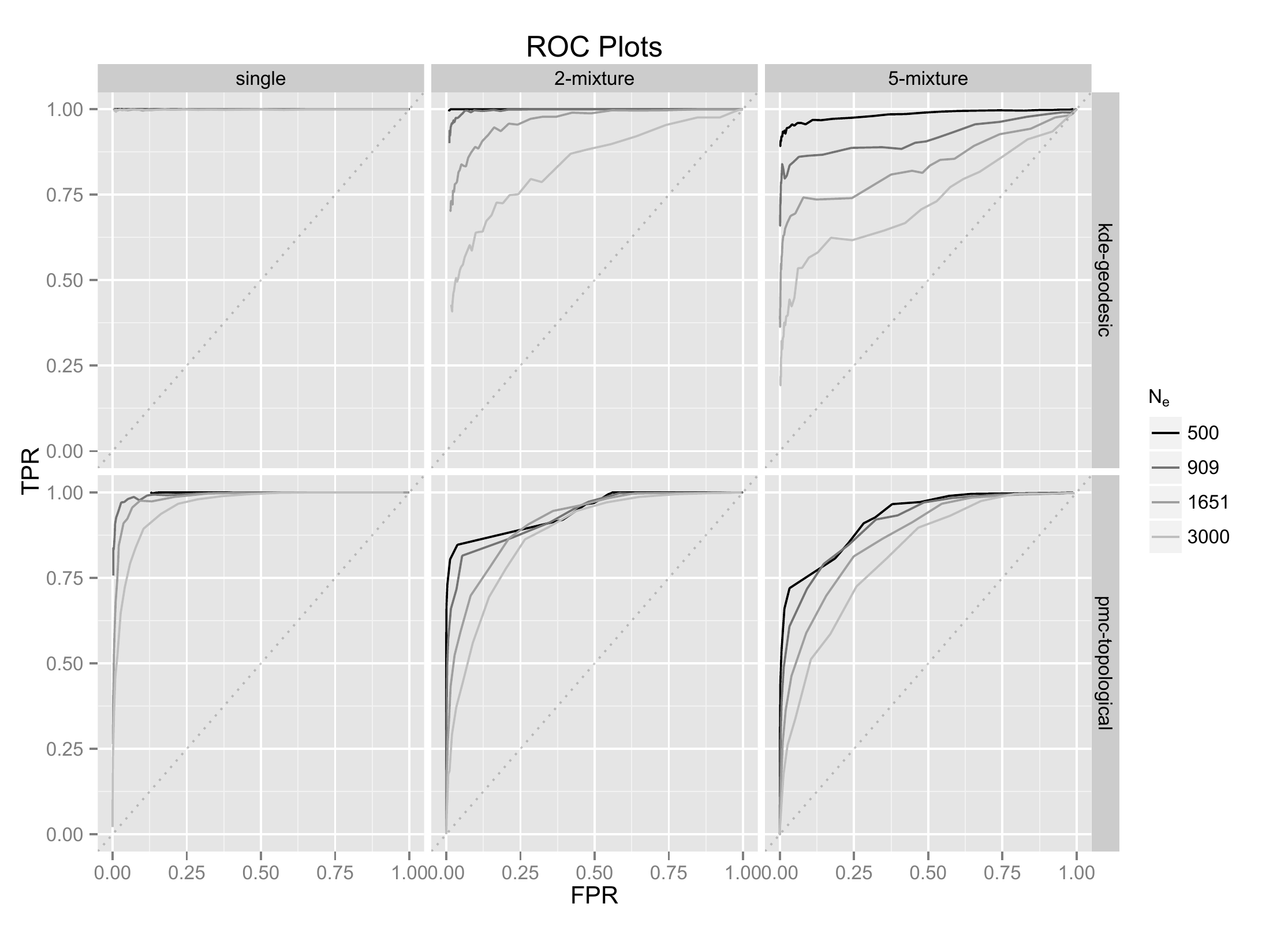}
  \caption[Updated \kde{} ROC curves.]{Receiver operating
    characteristic (ROC) plots comparing the updated \kde{}
    algorithm with Phylo-MCOA. ROC plots summarize the true and false
    positive rates for a binary classifier as the tuning parameters
    are changed. A perfect classifier would be represented as a single
    point at the upper left corner of the plot, while a completely
    random scheme would follow the dotted diagonal lines.  Curves are
    shown from simulated coalescent data, using a variety of effective
    population sizes ($N_e$). Larger values for $N_e$ correspond to
    more variability in the generated trees, and thus a more difficult
    classification problem.}
  \label{fig:hgmroc}
\end{figure}

\subsection{Apicomplexa}
The Apicomplexa data set presented by \cite{kissinger} consists of
trees reconstructed from 268 single-copy genes from the following
species: {\it Babesia bovis} (Bb) \citep{Brayton2007} (GenBank
accession numbers AAXT01000001--AAXT01000013), {\it Cryptosporidium
  parvum} (Cp) \citep{Abrahamsen2004} from CryptoDB.org
\citep{Heiges2006}, {\it Eimeria tenella} (Et) from GeneDB.org
\citep{Hertz-Fowler2004}, {\it Plasmodium falciparum} (Pf)
\citep{Gardner2002} and {\it Plasmodium vivax} (Pv) from PlasmoDB.org
\citep{Bahl2003}, {\it Theileria annulata} (Ta) \citep{Pain2005} from
GeneDB.org \citep{Hertz-Fowler2004}, and {\it Toxoplasma gondii} (Tg)
from Toxo-DB.org \citep{Gajria2008}. A free-living ciliate, {\it
  Tetrahymena thermophila} (Tt) \citep{Eisen2006}, was used as the
outgroup. To this set of sequences, we appended the Set8 gene, which
has been identified by \cite{Kishore} as a probable case of horizontal
gene transfer from a higher eukaryote to an ancestor of the
Apicomplexa.  This is the same data set analyzed as part of the
original \kde{} paper, which was analyzed again with the updated
algorithm. The new set of outlier trees is presented in Table
\ref{apicomplexa_names}. The newly identified set of outlier trees are
presented in a series of figures at the end of this \sect{}. The
figures are in ascending order by the kdetrees tree score, i.e., the first
tree depicted is the furthest outlying tree.

\begin{table*}
\centering
\caption[Outlier trees identified by the new kdetrees algorithm.]{Apicomplexa
  gene sets identified as outliers by the updated \kde{}. Genes which
  were not identified as outliers by the original algorithm are marked
  with a $*$.} 
\label{apicomplexa_names} 
\vspace{0.5\baselineskip}
{\footnotesize`
  \begin{tabular}{rlp{0.15\textwidth}p{0.6\textwidth}}
    \hline
    No.$^a$ & GeneID$^b$ & Functional Annotation & Analysis\\\hline
    $*$ 472  &PF14\_0059 & hypothetical protein & Tree topology inconsistent with phylogeny.  Bb and Cp on same branch, with Ta distant from sister species Bb.  Sequence alignment looks good in some regions, but with numerous gaps and other regions with poor alignment.  Multiple homopolymer stretches in Pv and Pf.\\
    $*$ 478  &PF14\_0326 & hypothetical protein & Tree topology not consistent with phylogeny of the species.  Bb branches with the outgroup Tt instead of it’s closely-related sister species Ta.  Poor alignment with numerous gaps, numerous homopolymer stretches, particularly in Et.\\
    488  &PF08\_0086 & RNA-binding protein, putative &Significant sequence length disparity (164 a.a. for Ta vs 1075a.a. for Pf).  Generally good sequence alignment in one region of ~100 residues; otherwise, alignment is poor.\\
    $*$ 505  & PF14\_0143& protein kinase, putative & Ta/Bb and Pf/Pv not monophyletic; split by outgroup Tt.  Good sequence alignment in multiple blocks, but significant sequence length differences.  Pf/Pv have multiple insertions and Et and Cp sequences are truncated.\\
    515  &PFA0390w&  DNA repair exonuclease, putative & Short  sequences  for Et and  Cp.   Several  homopolymer stretches  in Et.  Modest to good alignment in multiple blocks, Et being an exception in several  regions.  Possible incorrect annotation of Et sequence.\\
    $*$ 553 &PFC0730w &  conserved protein, putative & Tree topology inconsistent with phylogeny.  Bb and Ta are distant not monophyletic with Pv/Pf.  Short regions exhibiting good sequence alignment.  Et sequence is truncated.\\
    $*$ 578  & PF14\_0042& U3 small nucleolar ribonucleoprotein, U3
    snoRNP putative & Tree topology very inconsistent with phylogeny; Tg branch with outgroup Tt, Et branch with Bb and Ta.  Poor alignment.  Significant sequence length differences; Tg sequence is 4126 residues in length, Cp and Tt ~2000 residues, Pf and Pv are ~450 residues.\\
    $*$ 585 & PF10\_0054 & hypothetical protein & Cp exhibits anomalous placement in tree.  Significant sequence length differences; Pf, Pv, Tg about 1100 residues, Et only 349 residues, so numerous gaps in alignment.  Some regions show good alignment.\\
    $*$ 588  & PFI1020c &Inosine-5'-monophosphate dehydrogenase & Sequence alignment looks reasonably good.  Tree shows Cp branching with outgroup Tt and distant from other Api species.  Bb split from Ta.\\
    $*$ 630 & PFL2120w& hypothetical protein, conserved & Tree topology inconsistent with phylogeny.  Cp branching with Pv/Pf.  Ta/Bb not monophyletic with Pv/Pf.  Several blocks of sequence showing good alignment, but numerous gaps, due mostly to Ta insertions.\\
    641  & PFE0750c& hypothetical protein, conserved & Et on a very long branch with other species tightly clustered.  Large difference in sequence lengths; 269 residues for Et vs. 848 for Pf.  Central region with modest to good alignment; Et exhibited poor sequence identity. \\
    $*$ 645 & PF14\_0635& RNA binding protein, putative & Tree topology looks proper, although Pv and Pf are on a somewhat long branch.  Modest to good alignment.\\
    $*$ 662 & PF11\_0463 & coat protein, gamma subunit, putative & Multiple homopolymer stretches in Et sequence. Generally good alignment for all but Et; sequence might not be homologous.\\
    $*$ 725  & PF14\_0428& histidine -- tRNA ligase & Tree topology appears proper, but Pf/Pv on long branch.  Good alignment in two large blocks, but significant gaps and poor alignment in other regions.  Et sequence truncated (339 vs. ~1000 residues for others).  Ta sequence also truncated (583 residues).\\
    $*$ 745  & PF11\_0049 & hypothetical protein, conserved & Ta and Bb branch is distant from other Api species, which cluster tightly.  Regions of good sequence alignment by with several large gaps.  Sequence length differences; Pf and Pv = 3300 residues, Tg and Cp = 2600, Et only 347.\\
    $*$ 750  & PFE1050w & adenosylhomocysteinase
    (S-adenosyl-L-homocystein e hydrolase) & Ta and Bb somewhat
    distant from other Api species.  Relatively good sequence
    alignment, although Et sequence truncated (291 vs. ~480 for
    others).\\%adenosylhomocysteinase \\
\hline
\end{tabular}
}
\begin{flushleft}$^a$Based on geneset designations in \cite{kissinger}.\\
  $^b$Geneset represented by GeneID for \emph{Plasmodium
    falciparum}.\\
Pf = {\it Plasmodium falciparum}, Pv  = {\it Plasmodium vivax},  Bb = {\it Babesia  bovis},  Ta  = {\it Theileria annulata}, Et = {\it Eimeria tenella}, Tg = {\it Toxoplasma gondii},  Cp = {\it Cryptosporidium parvum}, and  Tt = {\it Tetrahymena thermophila} (outgroup).
\end{flushleft} 
\end{table*}

\section{Discussion}
\label{sec:discussion}
\subsection{Apicomplexa}
The outliers identified by \kde{} in the Apicomplexa dataset are
substantially different than those reported in our original paper. In
the original paper, many of the outliers were trees containing a
single edge much longer than any other edge. This was largely
attributable to one or more poorly aligned sequences within the larger
multiple sequence alignment. Thus, the disproportionate edges were
often leaf edges, and as a result of the changes in the algorithm, the
leaf edges are no longer taken into consideration by the default
settings. As a result, many of the trees identified as outliers in the
original paper are no longer identified as outliers by the default
parameter settings of the improved version. Of course, there is a new
option flag available, which can be used to restore the behavior
found in the original paper. The software also retains the
``dissimilarity mode'' from the original implementation, which always
uses the terminal branch length information.

The cumulative result of the changes in the algorithm is an increased
focus on differences in topology in the dataset. The new set of 16
outlying trees differ from the non-outliers primarily in the placement
of the Bb, Cp, Ta, and Tt genes within the trees. In the non-outlier
trees, Cp generally forms a clade with Tt, while Ta forms a clade with
Bb. In the outlier trees, however, these taxa are placed in widely
varying locations within the trees, as demonstrated by the tree
drawings of outlier trees found at the end of this \sect{}.

Together, these results demonstrate that the updated \kde{} algorithm
is more sensitive to topological differences in the trees than the
previous version, at the expense of the loss of information from the
terminal edge lengths. However, the original functionality using the
terminal edge information is still available for use, by setting the
appropriate flags.

%\subsection{Lungfish}

\subsection{Simulations}
The performance of the classifier with the simulated datasets is
substantially better than the original version of the
algorithm. Although there is a modest performance penalty associated
with the estimation of the normalizing constants, \kdet{} remains
significantly faster than the competitor \pmc{} algorithm \citep{MCOA}.

When the non-outlier trees are drawn from a single coalescent
distribution, the performance of the classifier is nearly perfect,
identifying the correct outlier in every simulation iteration, even
when the variance of the coalescent distributions (controlled by the
effective population size parameter~$N_e$) was quite large. (Of
course, due to the nature of the classifier, false positives are
inevitable if the tuning parameter is chosen poorly.) In the more
difficult cases where the non-outliers were drawn from a mixture of
coalescent distributions, the updated algorithm remained superior to
the Phylo-MCOA algorithm, showing greater area under the ROC curve for
all cases except for the case of the most highly variable 5-part
mixture distribution for the non-outlier trees.

\section{Conclusion}

 Our proposed method is motivated by the fact that existing methods
 of phylogenetic analysis and tree comparisons are not adequate for
 genomic scale phylogenetic analysis, particularly in cases of certain
 non-canonical evolutionary phenomena.
Furthermore, the scenario in our mixed coalescent distribution
simulation---where the non-outlier trees are sampled from an unknown
 mixture of distributions---cannot be handled by parametric methods,
 with the possible exception of the genome spectral methods.  However,
 even the genome spectral methods ignore possible statistical
 dependencies between different feature spectra.  In contrast, we
 propose analyzing a collection of gene trees without reducing gene
 trees to summarizing information.  Our \kdet{} approach also possesses
 a considerable advantage in speed over other methods, which is of
 paramount importance for a tool used in whole-genome phylogenetic
 analysis.

 In addition, one of the applications of our method is an inference on the
 species tree or a tree that reflects the evolution of most genes in
 the genome.  We can use our method to identify genes which produce
 discordant trees (outlying trees) and then we can remove them from
 phylogenetic analysis.  By doing this we can use the genes that share
 the same evolutionary history and we can build a tree that reflects
 the evolution of the species or that of most of the genome.

We have been interested in developing a phylogenomic pipeline that is
convenient and accessible, as well as robust. To accomplish this aim, an
important problem that needs attention is to comparing
thousands of gene phylogenies across whole genomes.  Thus, our
approach is focused on the 
efficiency of the algorithm in terms of computational complexity and
memory requirements, with less emphasis on achieving the highest
classification accuracy possible.
Thus, it is very important for us to improve the accuracy of the
method while we maintain the speed of the algorithm.  
Compared with the original \kdet{} and the Phylo-MCOA algorithm, in
this paper, we have demonstrated the 
significant improvement in the accuracy of the method without costing
the computational speed (see 
the simulation results in Figure \ref{fig:hgmroc}). 

In future work we intend to apply \kdet{} to clustering trees
based on similarity.  
Unsupervised clustering is an important method to learn the structure
of unlabeled data. The aim of clustering methods is to group patterns
on the basis of a similarity (or dissimilarity) criteria where groups
(or clusters) are set of similar patterns. 

Many traditional clustering algorithms (e.g., K-Means, Fuzzy c-Means,
SOM and Neural Gas) have the version of kernel-based algorithms
\cite[]{Hur2000,Hur2001,Camastra2005}. The use of kernels allows to map
implicitly data into an high dimensional space, called feature space;
computing a linear partitioning in this feature space results in a
nonlinear separation between clusters in the input space. Using
kernels defined in the space of trees, mapped into vector spaces, to
evaluate trees' structural similarity allows us to use Kernel-based
clustering methods to group trees in the space.

\bibliographystyle{named}
\bibliography{dissertation}

\appendix

\begin{figure}
  \centering
\includegraphics[width=\textwidth]{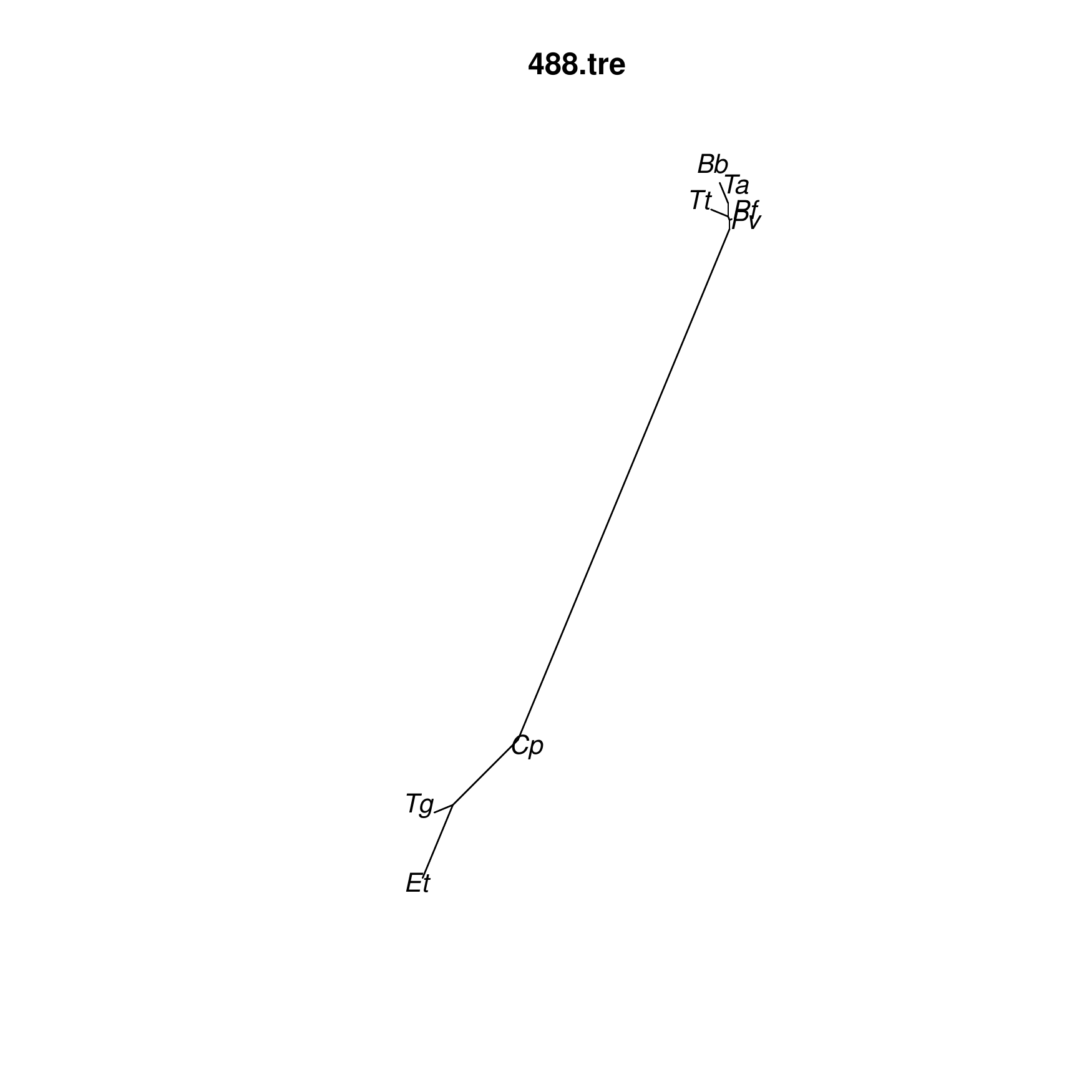}
\caption{A newly identified outlier from the Apicomplexa dataset.}

\end{figure}
\begin{figure}
  \centering
\includegraphics[width=\textwidth]{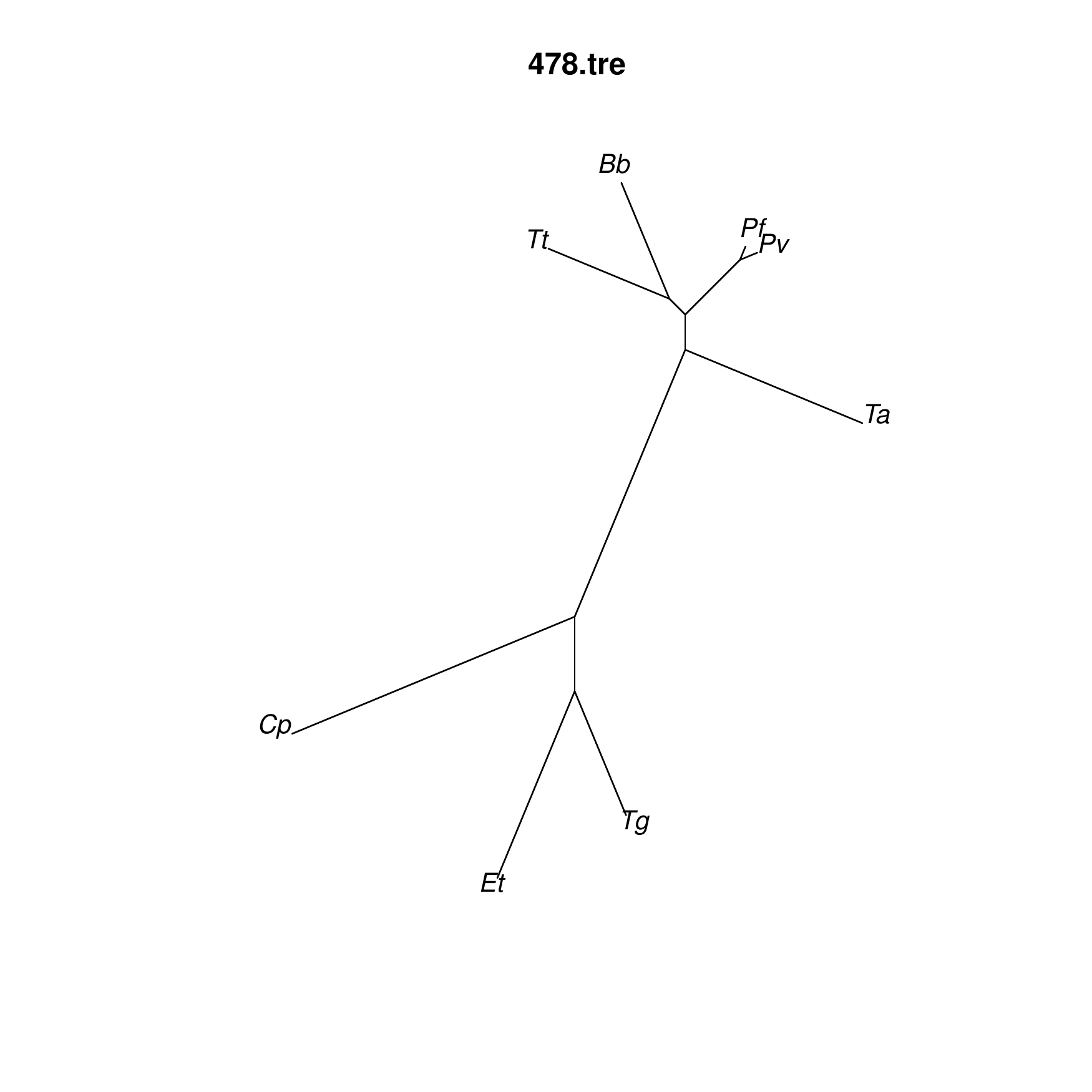}
\caption{A newly identified outlier from the Apicomplexa dataset.}
\label{fig:apiout}
\end{figure}
\begin{figure}
  \centering
\includegraphics[width=\textwidth]{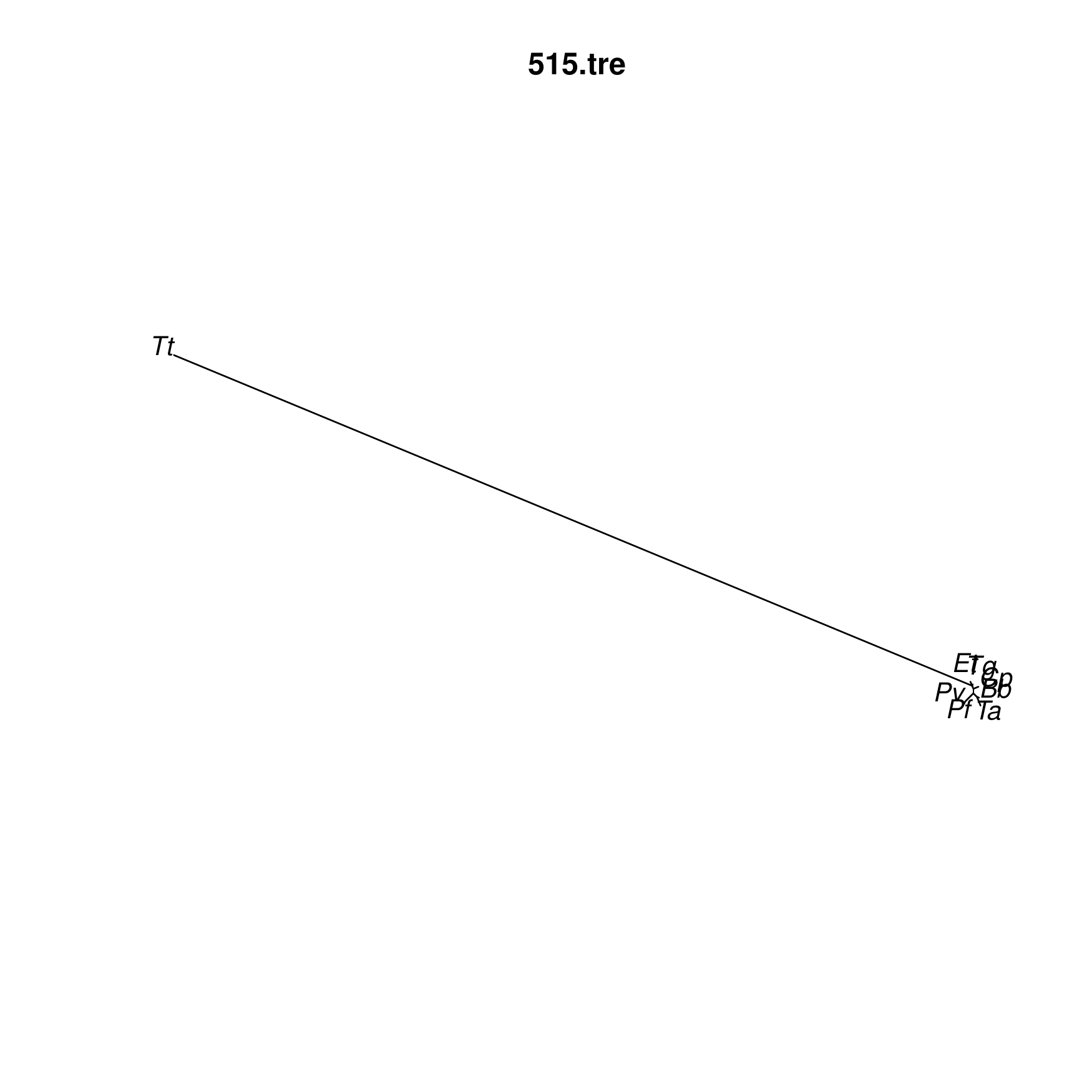}
\caption{A newly identified outlier from the Apicomplexa dataset.}

\end{figure}
\begin{figure}
  \centering
\includegraphics[width=\textwidth]{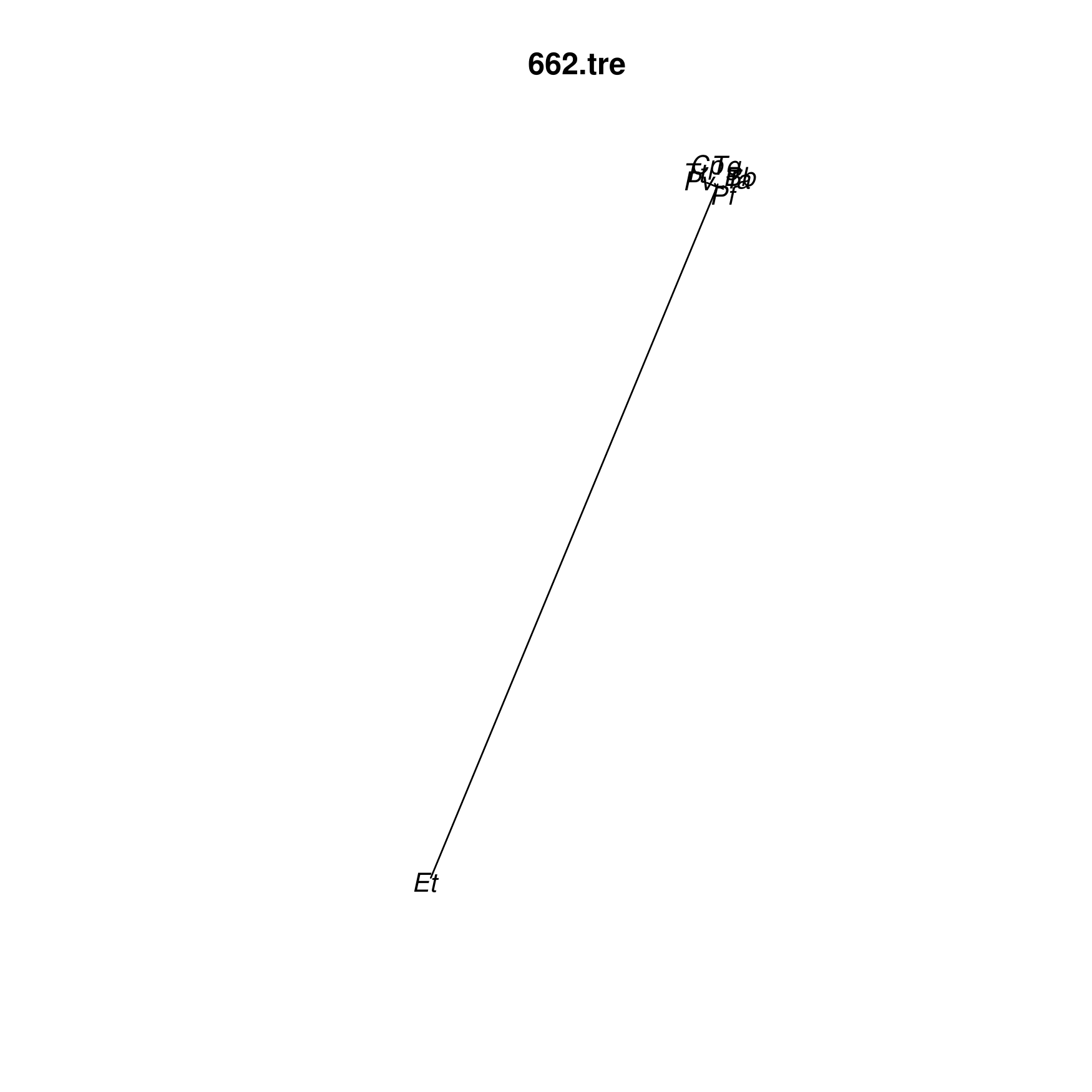}
\caption{A newly identified outlier from the Apicomplexa dataset.}

\end{figure}
\begin{figure}
  \centering
\includegraphics[width=\textwidth]{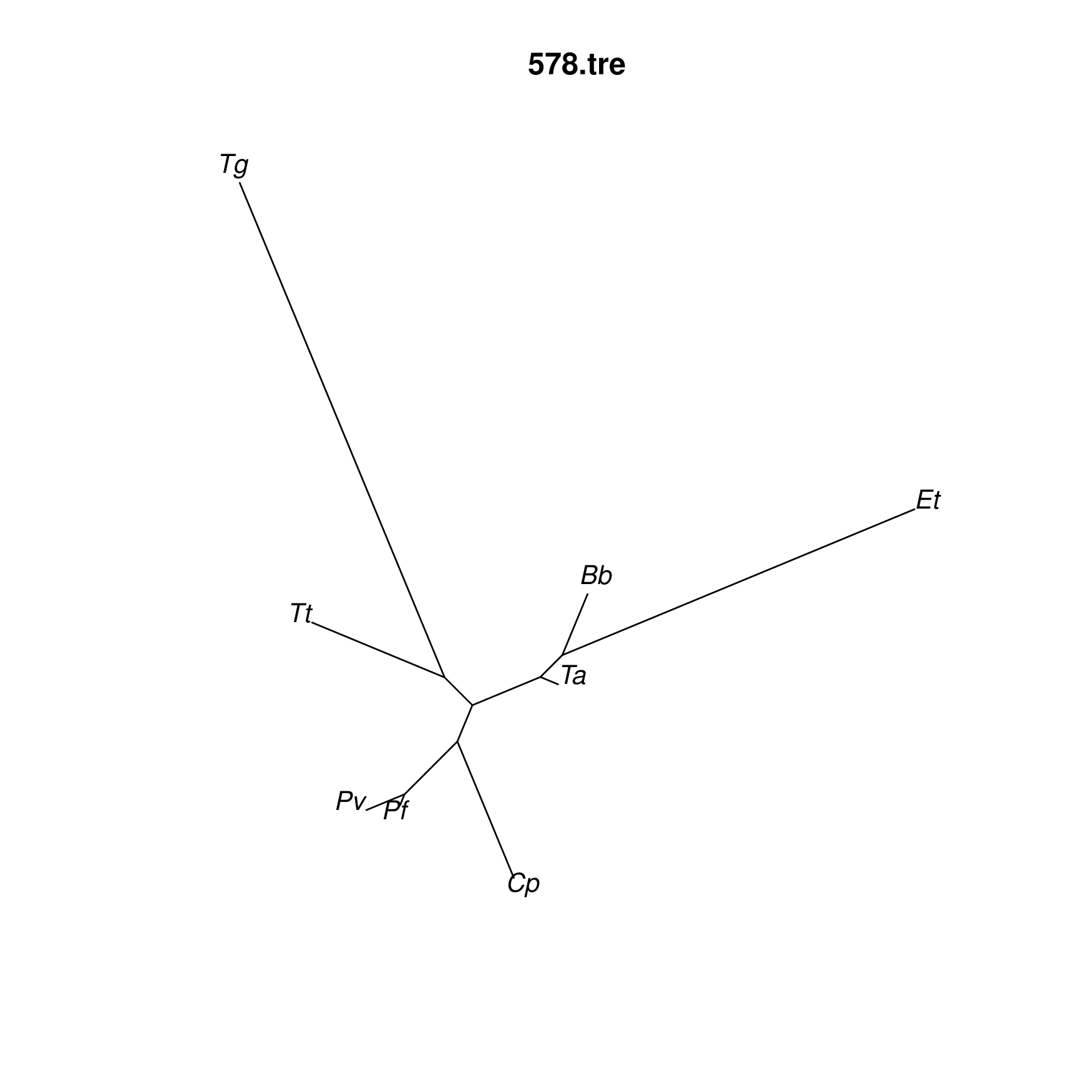}
\caption{A newly identified outlier from the Apicomplexa dataset.}

\end{figure}
\begin{figure}
  \centering
\includegraphics[width=\textwidth]{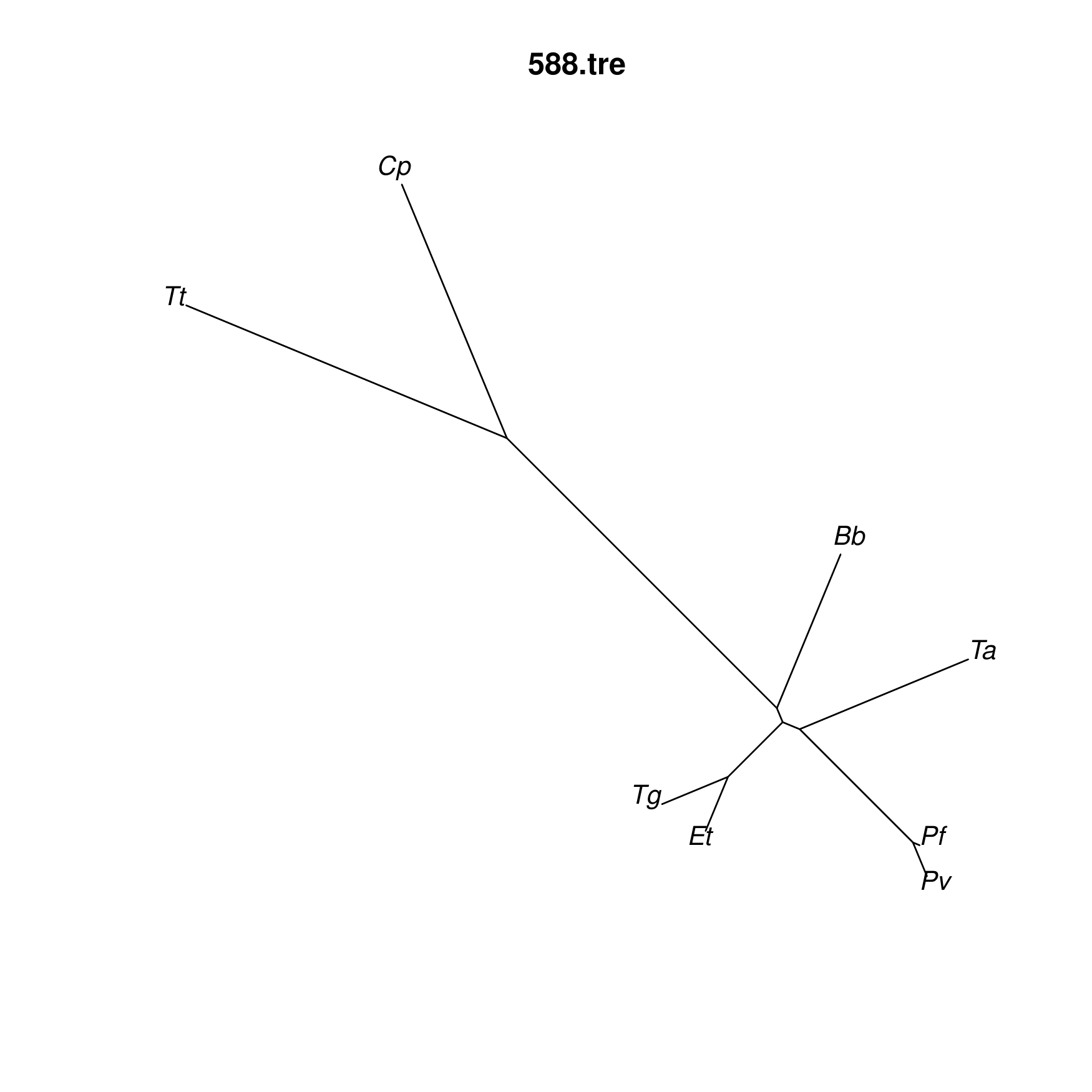}
\caption{A newly identified outlier from the Apicomplexa dataset.}

\end{figure}
\begin{figure}
  \centering
\includegraphics[width=\textwidth]{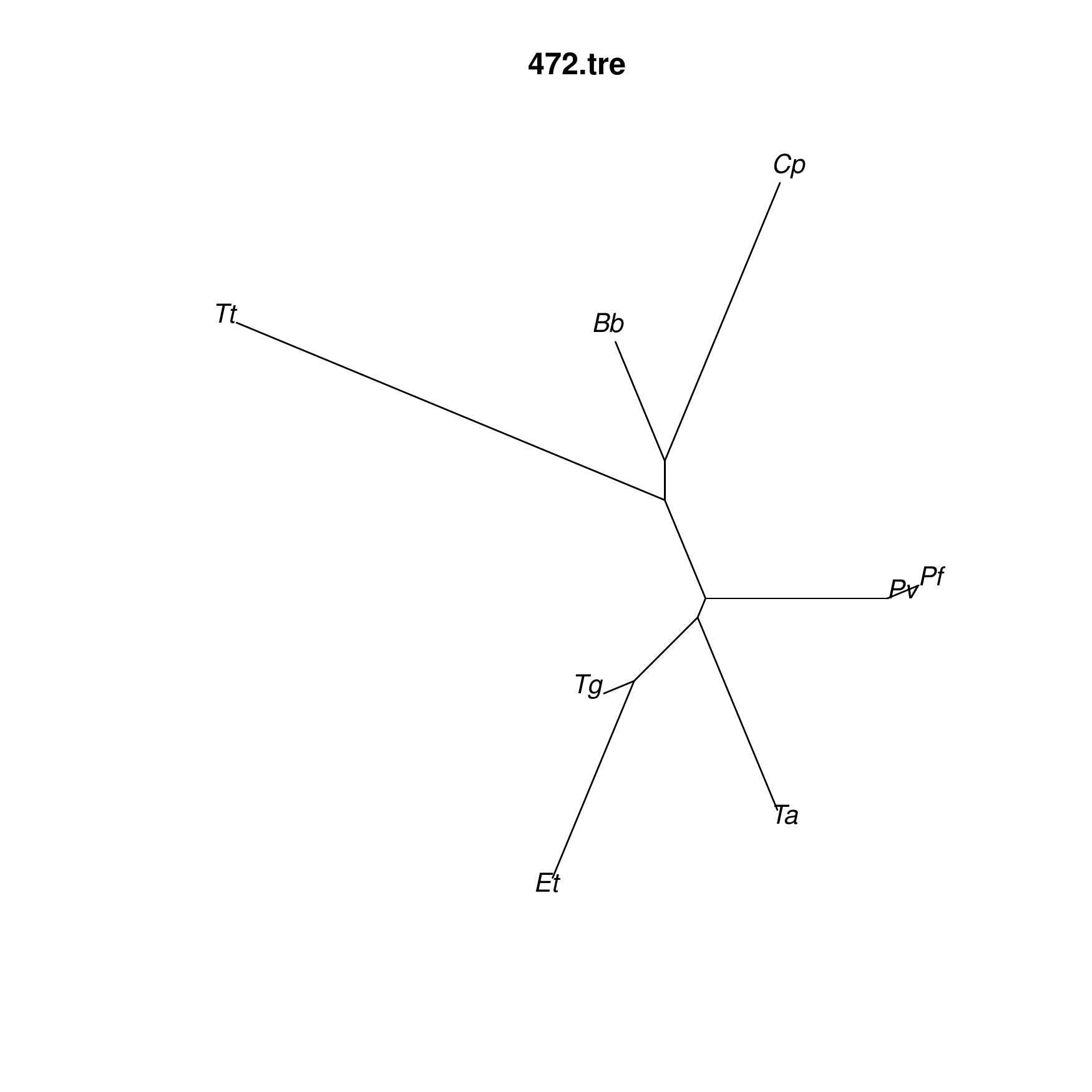}
\caption{A newly identified outlier from the Apicomplexa dataset.}

\end{figure}
\begin{figure}
  \centering
\includegraphics[width=\textwidth]{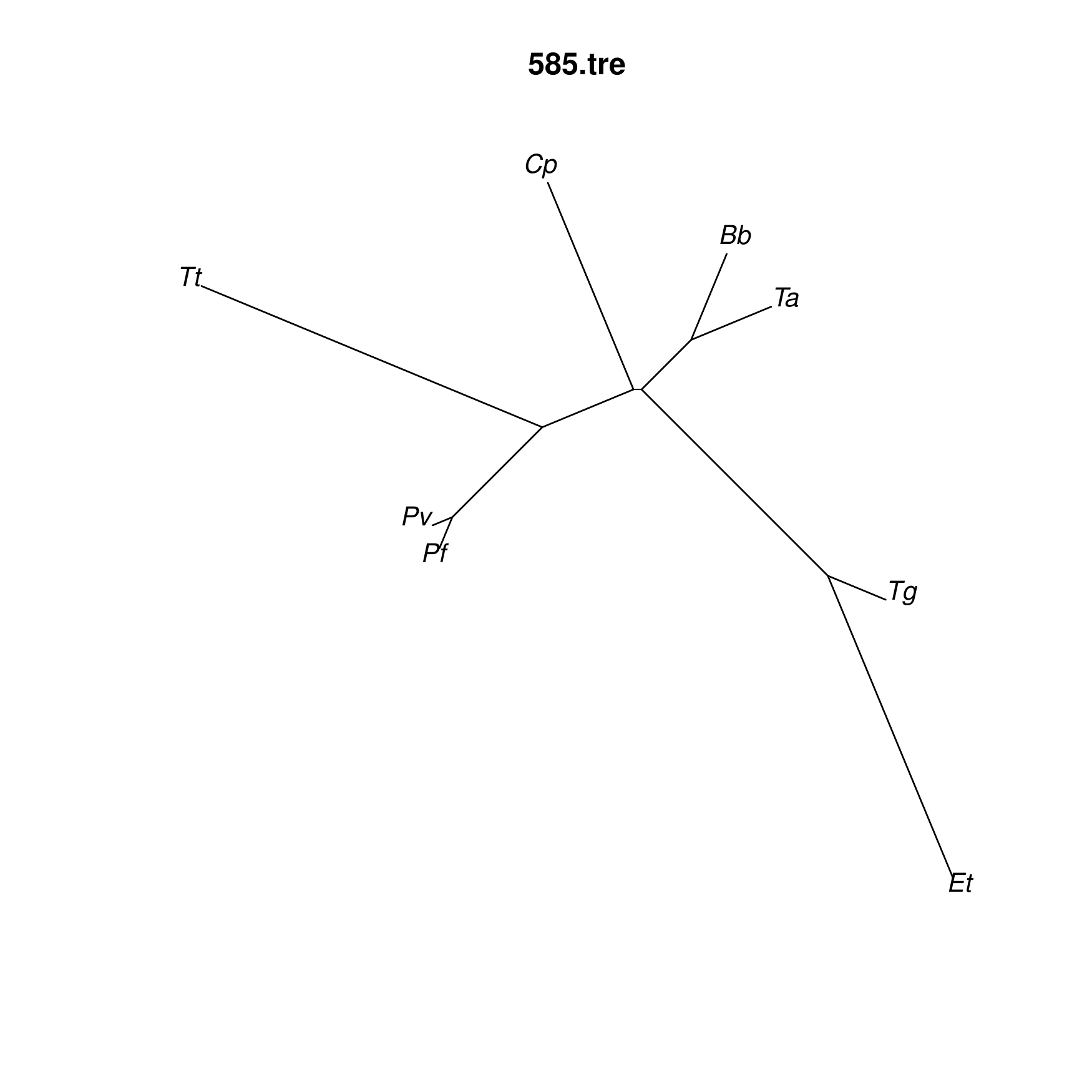}
\caption{A newly identified outlier from the Apicomplexa dataset.}

\end{figure}
\begin{figure}
  \centering
\includegraphics[width=\textwidth]{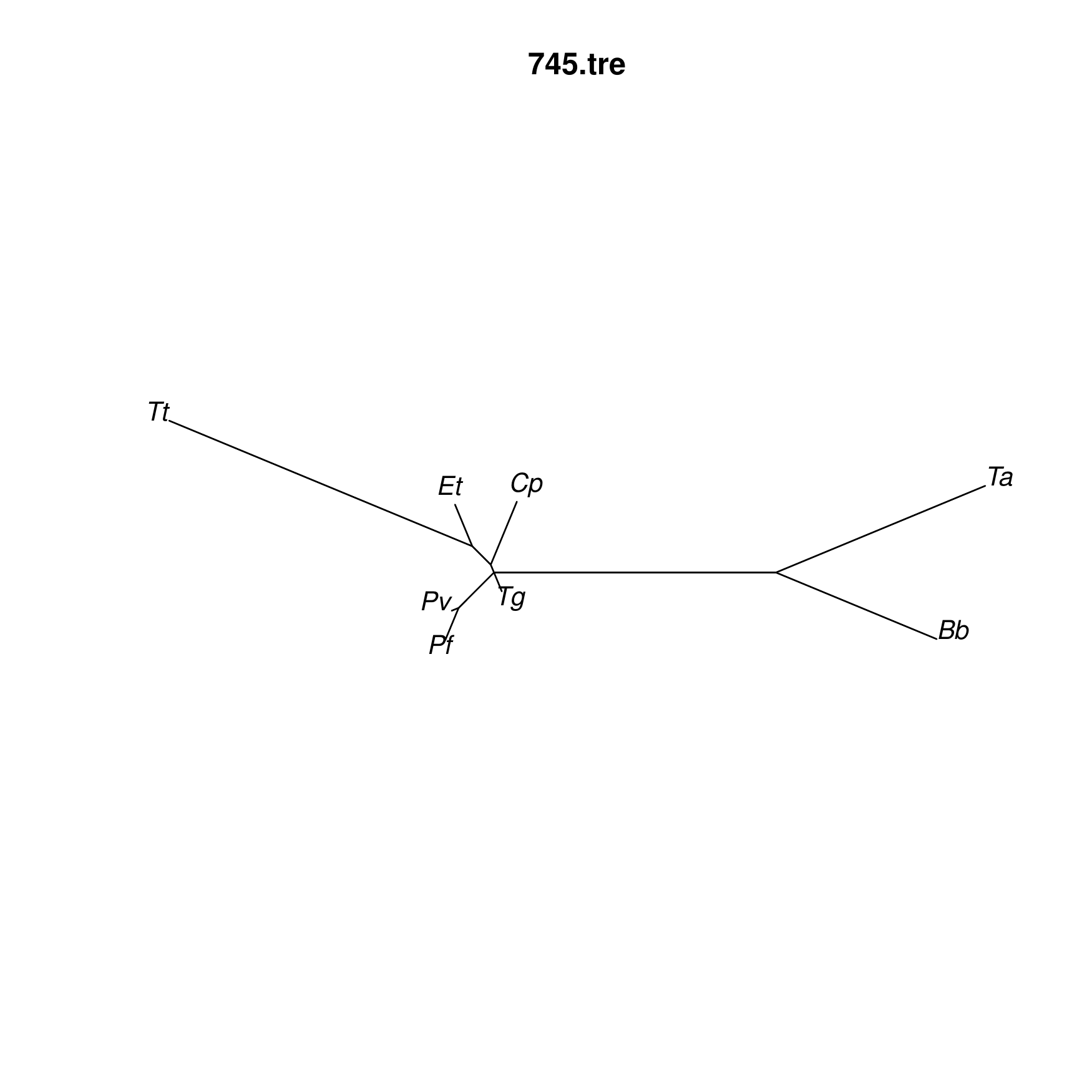}
\caption{A newly identified outlier from the Apicomplexa dataset.}

\end{figure}
\begin{figure}
  \centering
\includegraphics[width=\textwidth]{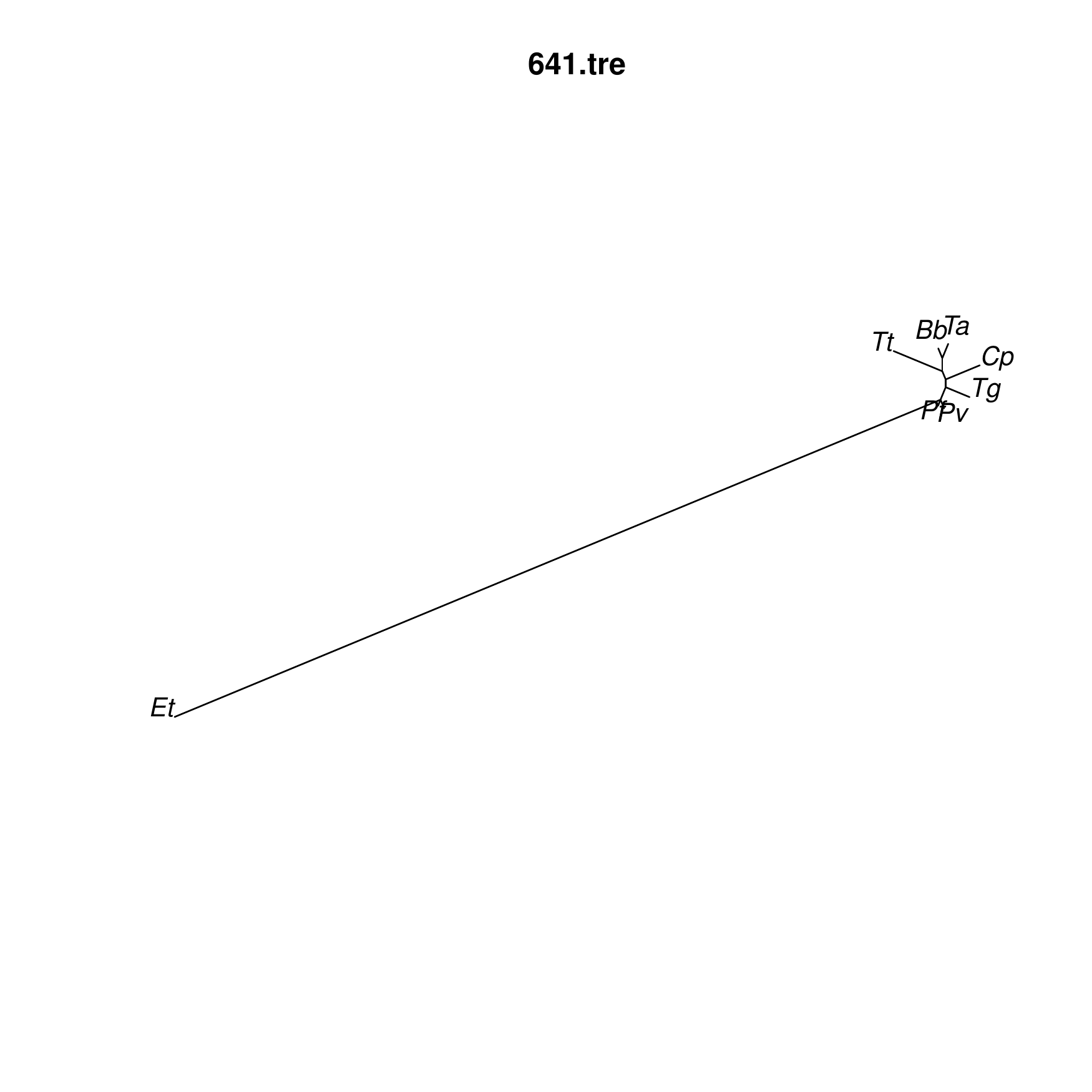}
\caption{A newly identified outlier from the Apicomplexa dataset.}

\end{figure}
\begin{figure}
  \centering
\includegraphics[width=\textwidth]{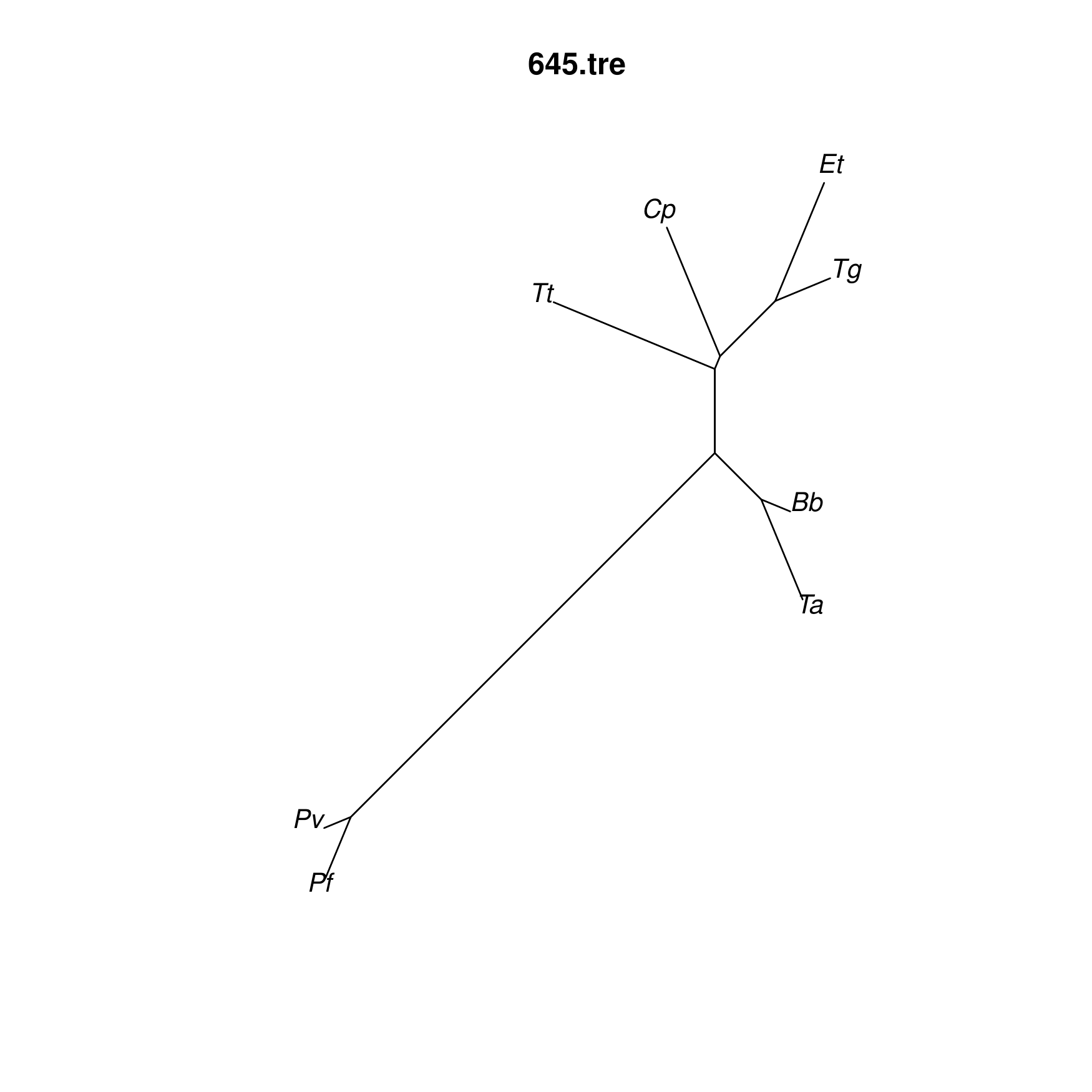}
\caption{A newly identified outlier from the Apicomplexa dataset.}

\end{figure}
\begin{figure}
  \centering
\includegraphics[width=\textwidth]{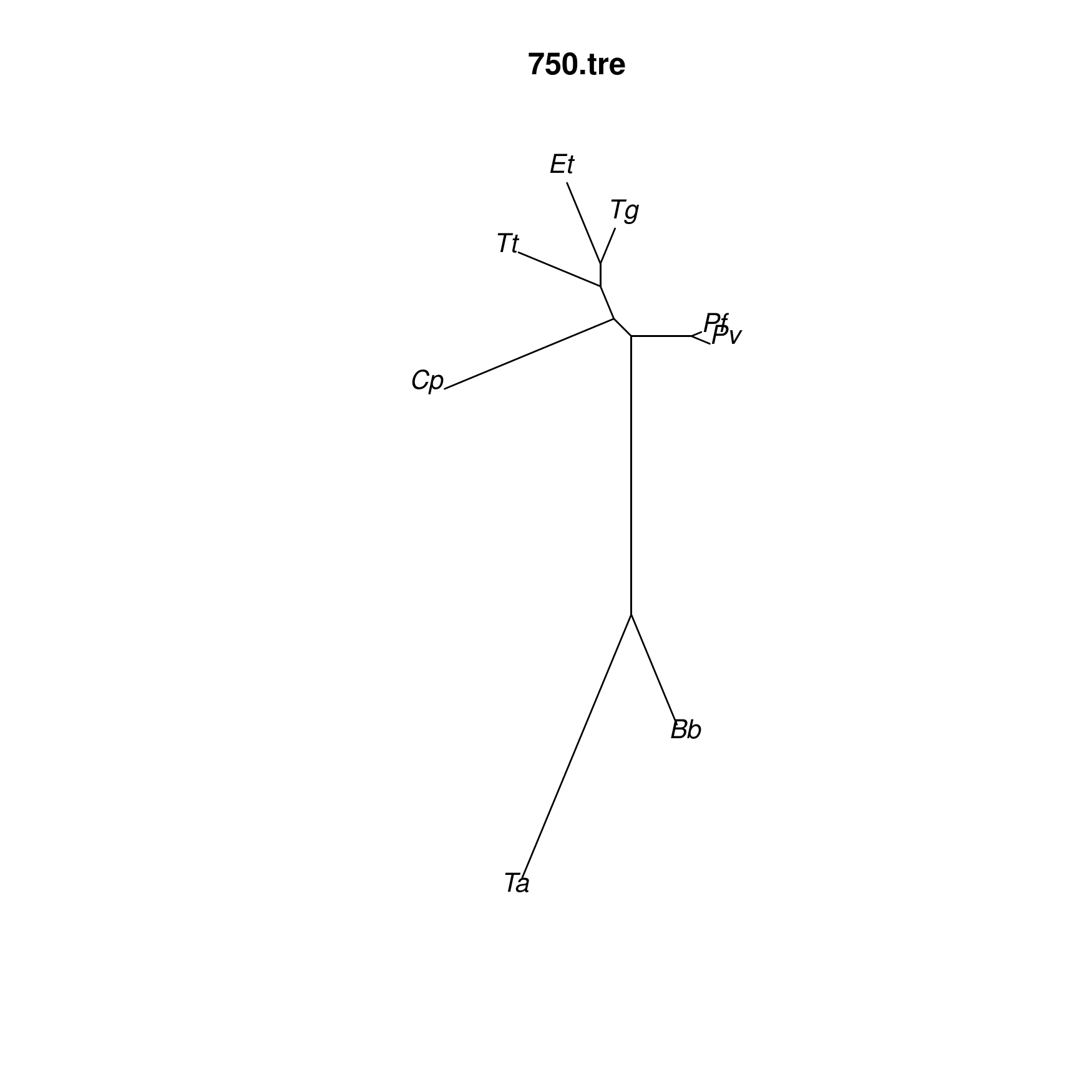}
\caption{A newly identified outlier from the Apicomplexa dataset.}

\end{figure}
\begin{figure}
  \centering
\includegraphics[width=\textwidth]{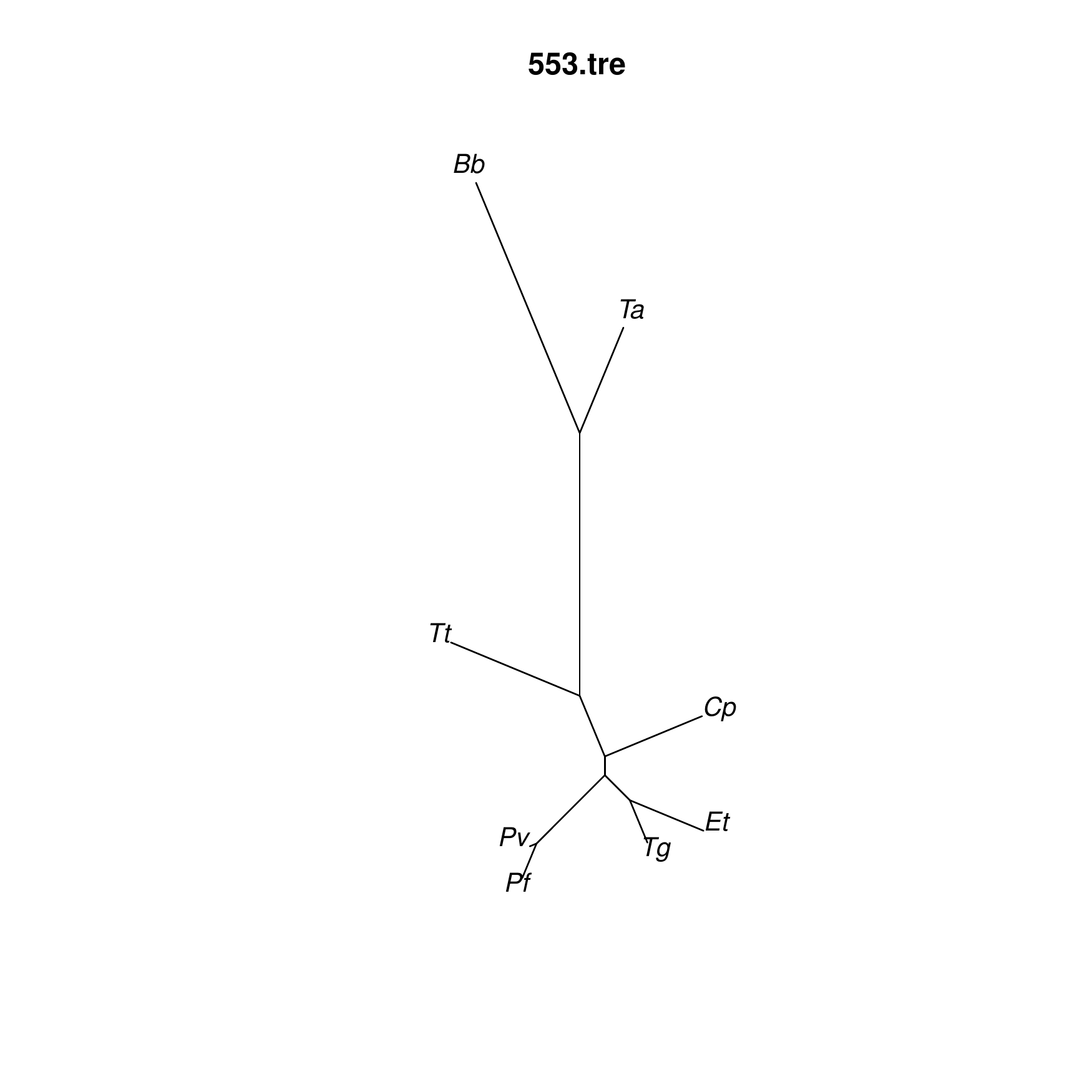}
\caption{A newly identified outlier from the Apicomplexa dataset.}

\end{figure}
\begin{figure}
  \centering
\includegraphics[width=\textwidth]{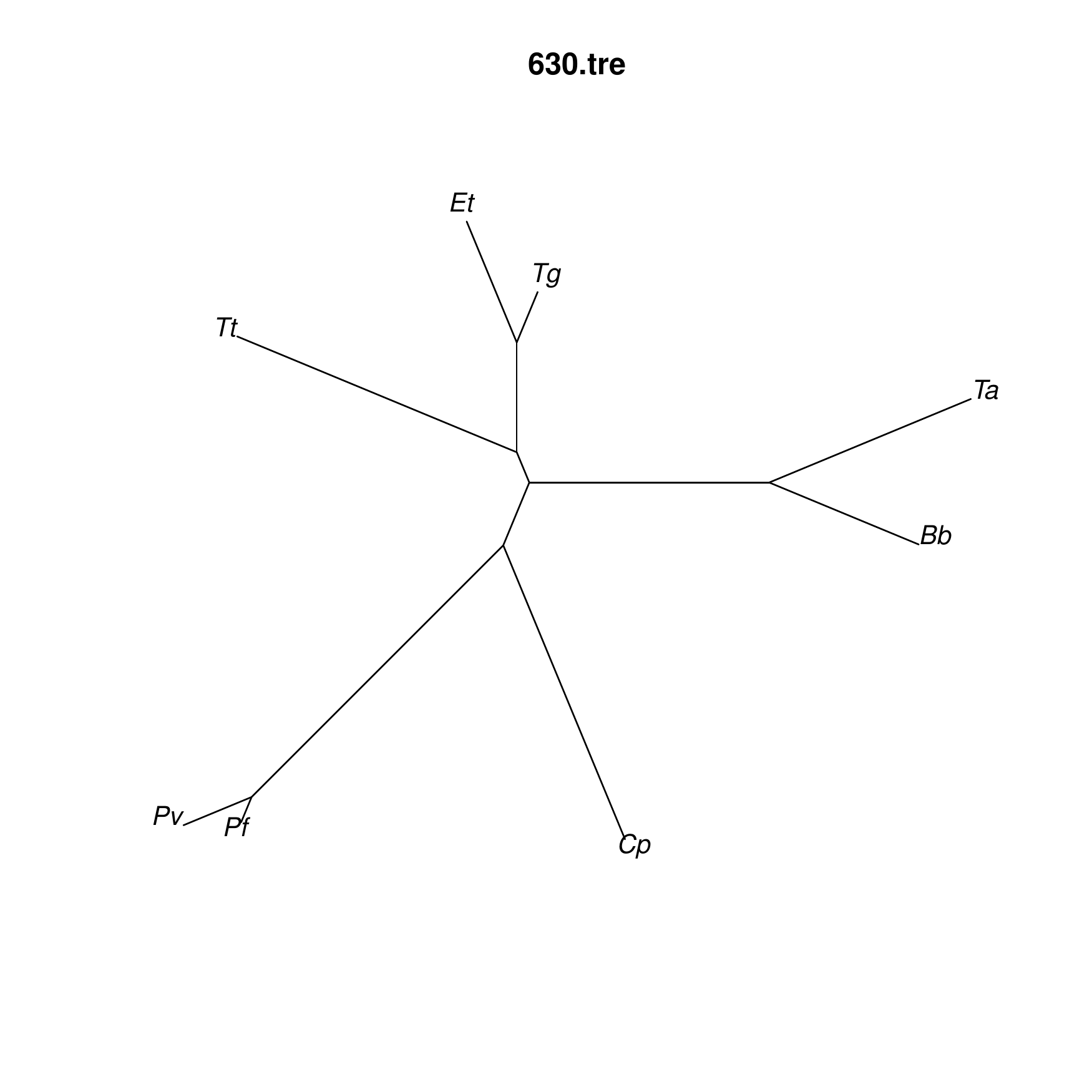}
\caption{A newly identified outlier from the Apicomplexa dataset.}

\end{figure}
\begin{figure}
  \centering
\includegraphics[width=\textwidth]{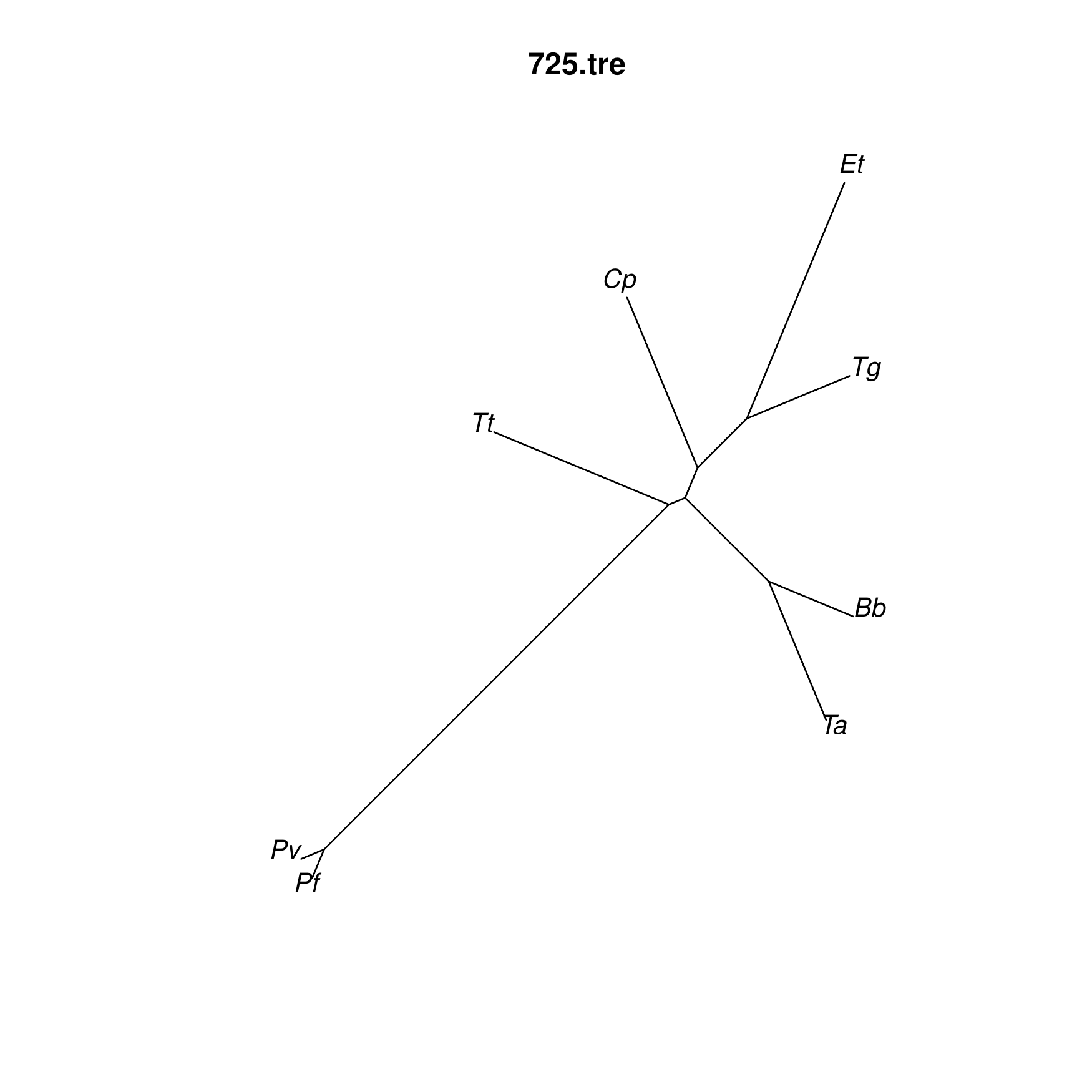}
\caption{A newly identified outlier from the Apicomplexa dataset.}

\end{figure}
\begin{figure}
  \centering
\includegraphics[width=\textwidth]{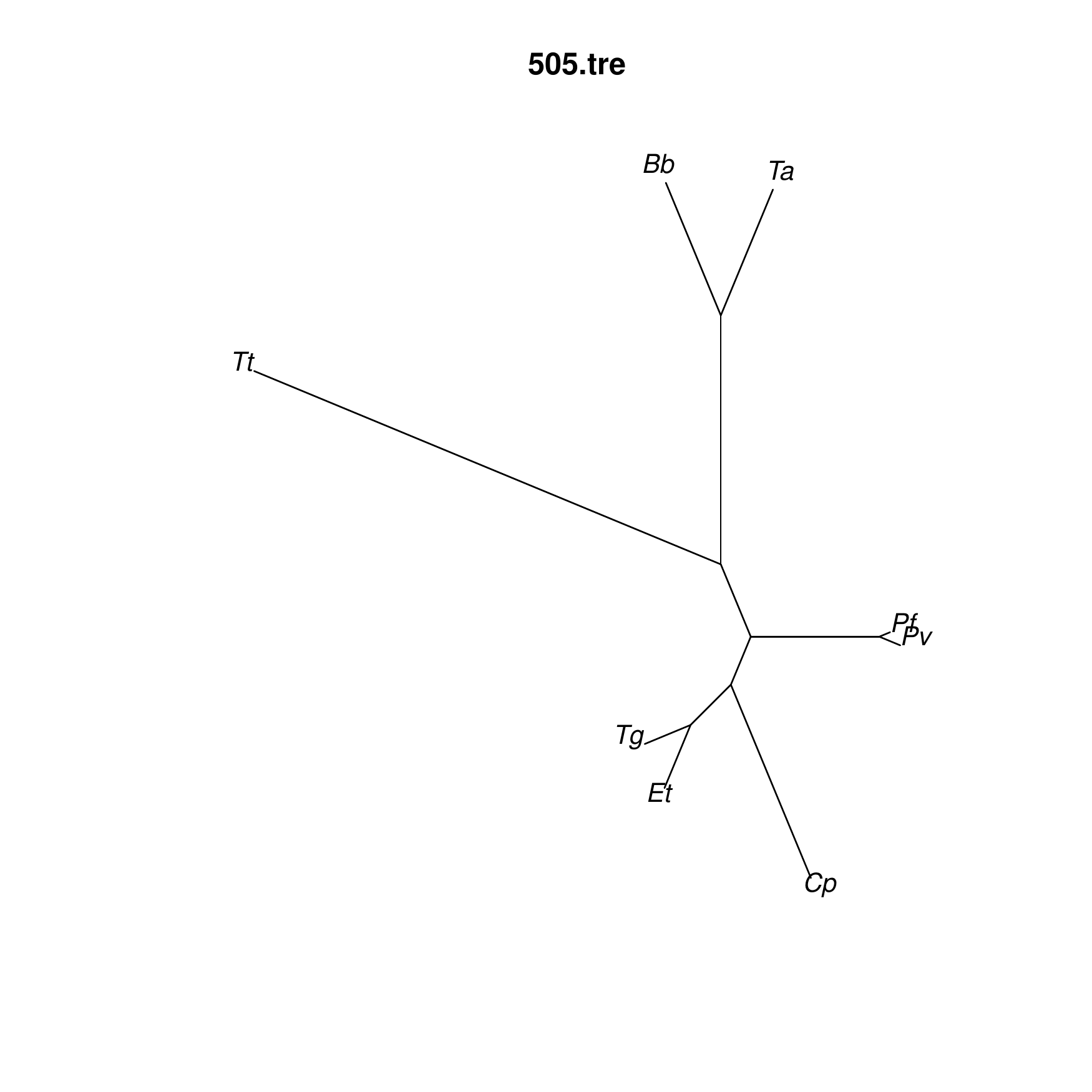}
\caption{A newly identified outlier from the Apicomplexa dataset.}
\end{figure}

\end{document}